\documentclass[a4paper,UKenglish,cleveref, autoref, thm-restate]{lipics-v2021}

\usepackage{mathtools}
\usepackage{todonotes}
\usepackage{amsthm}
\usepackage{placeins}
\usepackage{algorithm}
\usepackage[noend]{algpseudocode}
\usepackage{xcolor}
\usepackage{subcaption}
\usepackage{adjustbox}

\usepackage{tikz}
\usetikzlibrary{calc}
\usetikzlibrary{arrows.meta}
\usetikzlibrary {positioning}
\usetikzlibrary{decorations.pathreplacing,angles,quotes}

\pdfoutput=1 %
\hideLIPIcs  %

\title{Advances in Exact and Approximate Group Closeness Centrality Maximization}

\titlerunning{Exact and Approximate Group Closeness Maximization}

\author{Christian Schulz}{Heidelberg University, Faculty of Mathematics and Computer Science, Germany}{christian.schulz@informatik.uni-heidelberg.de}{https://orcid.org/0000-0002-2823-3506}{}
\author{Jakob Ternes\footnote{Corresponding author}}{Heidelberg University, Faculty of Mathematics and Computer Science, Germany}{jakob.ternes@student.kit.edu}{https://orcid.org/0009-0003-2857-8293}{}
\author{Henning Woydt}{Heidelberg University, Faculty of Mathematics and Computer Science, Germany}{henning.woydt@informatik.uni-heidelberg.de}{https://orcid.org/0009-0004-2234-2869}{Supported by the Deutsche Forschungsgemeinschaft (DFG, German Research Foundation) -- DFG SCHU 2567/6-1}

\authorrunning{C. Schulz, J. Ternes and H. Woydt}

\Copyright{Christian Schulz, Jakob Ternes and Henning Woydt}

\ccsdesc[500]{Mathematics of computing~Graph algorithms}

\keywords{Group Closeness Centrality, Exact Algorithms, Approximation Algorithms}

\category{} %

\supplement{}
\supplementdetails[subcategory={Source Code}, cite={}, swhid={}]{Software}{https://github.com/AEGH-Grover/Grover}

\acknowledgements{The results of this work are based on the corresponding author’s bachelor's thesis \cite{ternes_2025}.}%

\nolinenumbers %

\newcommand{\dist}{\text{dist}}
\newcommand{\ecc}{\text{ecc}}
\newcommand{\diam}{\text{diam}}

\newcommand{\gccm}{\textsc{GCCM}}
\newcommand{\greedy}{\textsc{Greedy}}

\begin{document}

    \maketitle

    \begin{abstract}
        In the NP-hard \textsc{Group Closeness Centrality Maximization} problem, the input is a graph $G = (V,E)$ and a positive integer $k$, and the task is to find a set $S \subseteq V$ of size $k$ that minimizes group farness $f(S) = \sum_{v \in V} \min_{s \in S}\dist(v,s)$.
        The state-of-the-art exact algorithm iteratively solves ILPs of increasing size until the final ILP can provably represent an optimal solution.
        We introduce a new data reduction technique that eliminates variables from the ILP by proving that certain vertices have their distance to any optimal solution structurally determined by a neighbor.
        Additionally, we bootstrap the exact solver with an approximate solution to produce near-sufficient ILPs from the first iteration, reducing the number of needed iterations.
        Our improvements yield a speedup by a factor of $4.5$ over the next best exact algorithm and can achieve speedups by up to a factor of $34.1$.
        Furthermore, we add reduction techniques to a $1/5$-approximation algorithm, and show that these adaptations do not compromise its approximation guarantee.
        The improved algorithm achieves mean speedups of up to $1.6$ and a maximum speedup of $9.6$ times.
        Finally, we settle an open question by proving that a widely used greedy algorithm admits arbitrarily poor approximation ratios.
    \end{abstract}

    \newpage
    \setcounter{page}{1}

    \section{Introduction}\label{sec:introduction}
    A crucial task in the analysis of networks is the identification of vertices or groups of vertices which, in some sense, are important.
    Centrality measures quantify the importance by assigning a numeric value to a vertex or a group of vertices.
    They are frequently used in the analysis of social networks~\cite{social2013}, biological networks~\cite{van2013network, chea2007accurate}, electrical power grids~\cite{AdebayoSun+2020} and, among many other uses, in facility location problems~\cite{Koschuetzki2005}.
    Various measures have been proposed to identify the centrality of an individual vertex in a network.
    For instance, closeness centrality measures how far a node is from the other nodes in the graph, betweenness centrality measures how often a node lies on the shortest path between other nodes, and degree centrality simply captures the number of neighbors.
    As noted by Everett and Borgatti~\cite{everett1999centrality}, these measures fall short of identifying vertices which are collectively central as a group, leading them to propose generalizations of the aforementioned measures to groups of vertices.
    Indeed, it has been shown that the overlap between the top $k$ nodes with the greatest individual closeness centrality and groups of size $k$ with high group closeness centrality is relatively small~\cite{DBLP:journals/corr/abs-1710-01144}.
    Naturally, the question arises of how to efficiently find groups of a given size $k$ such that a certain group centrality measure is maximized.
    In the NP-hard~\cite{chen2016efficient, staus2023exact} group closeness centrality maximization (\gccm) problem, a positive integer $k$ and an undirected connected graph $G = (V,E)$ are given, and the task is to find a group \hbox{$S \subseteq V$} of size $k$ that maximizes
    \[
        c(S) \coloneq \frac{|V|-k}{\sum_{v \in V} \min_{s \in S} \dist(v,s)}.
    \]

    There are two prominent ways to exactly solve the problem.
    On the one hand, branch-and-bound algorithms specifically designed for this problem are used.
    These algorithms exhaustively explore a search-tree containing all possible solutions.
    To speed up the computation, many pruning techniques are used to cut off large parts of the explorable search-space.
    Such methods have been explored by Rajbhandari~et~al.~\cite{rajbhandari2022presto}, Staus~et~al.~\cite{staus2023exact} and Woydt~et~al.~\cite{Woydt24}.
    On the other hand, ILP solvers can be used to solve \gccm.
    Since ILPs can capture a vast range of problems, a huge amount of research and engineering has flowed into ILP solvers.
    This has made ILP solvers an incredibly powerful tool for a large class of optimization problems, including \gccm.
    Bergamini~et~al.~\cite{DBLP:journals/corr/abs-1710-01144} were the first to propose an ILP for this problem with $\mathcal{O}(n^2)$ variables and constraints.
    Staus~et~al.~\cite{staus2023exact} then improved on the proposed formulation and reduced the variable and constraint count to $\mathcal{O}(n \cdot \diam(G))$.
    Additionally, they proposed an iterative approach that starts by solving smaller ILPs and only if necessary enlarges them.
    Experimental evaluations in~\cite{staus2023exact, Woydt24} show that ILP solvers currently outperform the branch-and-bound approach.

    When solution quality is not paramount, one may decide to use a heuristic or approximation algorithm.
    Chen~et~al.~\cite{chen2016efficient} considered a simple greedy algorithm that starts with the empty set and iteratively adds the vertex that improves the score the most.
    They claimed that the algorithm has a $(1-1/e)$-approximation guarantee.
    The runtime of the algorithm was massively improved by Bergamini~et~al.~\cite{DBLP:journals/corr/abs-1710-01144} and they observed very high empirical solution quality when compared to an optimal solution.
    However, they also showed that the $(1-1/e)$-approximation guarantee was erroneously assumed.
    It is therefore unclear whether the simple greedy algorithm has an approximation guarantee for \gccm.
    Angriman~et~al.~\cite{angriman2021group} later presented an algorithm with a $1/5$-approximation ratio that is based on a local search algorithm for the related $k$-median problem~\cite{arya2001local}.
    The algorithm may start with any initial solution, which is then refined using a swap-based local search procedure.
    Vertices inside the solution are swapped with vertices outside of the solution if the swap improves the objective.
    If no more objective-improving swaps can be made then the algorithm terminates.

    \textbf{Our Contribution.}
    We develop new algorithmic techniques for group closeness centrality maximization, both in the exact and approximate case.
    In the exact case, we introduce \emph{absorbed vertices}, a data reduction that goes beyond the known dominated vertex removal proposed by Staus~et~al.~\cite{staus2023exact}.
    Additionally, we bootstrap the iterative ILP framework with an approximate solution, producing near-optimal ILPs from the first iteration.
    For some instances, these improvements can lead to a speedup of $34.16\times$, while the overall mean speedup is $4.59\times$.
    In the approximate case, we prove that the search space of the $1/5$-approximation algorithm of Angriman~et~al.~\cite{angriman2021group} can be restricted to non-dominated vertices without compromising its approximation guarantee, achieving mean speedups of $1.61\times$.
    Finally, we show that the greedy algorithm proposed by Chen~et~al.~\cite{chen2016efficient} does not have an approximation guarantee for \gccm.

    \section{Preliminaries}\label{sec:preliminaries}
    Let \( G = (V, E) \) be an undirected, unweighted, connected graph without self-loops.
    Two vertices \( u, v \in V \) are \emph{adjacent} if \( \{u,v\} \in E \).
    The \emph{neighborhood} of a vertex \( v \) is \( N(v) = \{u \in V \mid \{u,v\} \in E\} \), and its \emph{degree} is \( \deg(v) = |N(v)| \).
    The \emph{maximum degree} is \( \Delta(G) = \max_{v \in V} \deg(v) \).
    The \emph{closed neighborhood} is \( N[v] = N(v) \cup \{v\} \), and \( v \) \emph{dominates} \( u \) if \( N[u] \subseteq N[v] \).
    The \emph{distance} \( \dist(u,v) \) is the length of a shortest \(u\)-\(v\) path, and for a set \( S \subseteq V \), \( \dist(u,S) = \min_{v \in S} \dist(u,v) \).
    The \emph{eccentricity} of \( v \) is \( \ecc(v) = \max_{u \in V} \dist(u,v) \), and the \emph{diameter} of \( G \) is \( \diam(G) = \max_{v \in V} \ecc(v) \).
    A vertex \( w \in V \) is a \emph{cut vertex} if removing it increases the number of connected components.

    The \emph{closeness centrality} measure~\cite{10f5227e-a114-31a3-b1b2-cc4f421e84a5, https://doi.org/10.1002/bs.3830100205, sabidussi1966centrality} (see Eq.~\ref{eq:closeness-vertex}) scores each vertex $v \in V$ based on how close it is to all other vertices, while the \emph{group closeness centrality} measure~\cite{everett1999centrality} (see Eq.~\ref{eq:closeness-group}) generalizes this score to vertex sets $S \subseteq V$. \\ 
    \begin{minipage}[t]{0.48\linewidth}
        \refstepcounter{equation}\label{eq:closeness-vertex}
        \[
            (\theequation) \quad c(v) \coloneq \frac{|V|-1}{\sum_{u \in V} \dist(u,v)}
        \]
    \end{minipage}\hfill
    \begin{minipage}[t]{0.48\linewidth}
        \refstepcounter{equation}\label{eq:closeness-group}
        \[
            (\theequation) \quad c(S) \coloneq \frac{|V|-|S|}{\sum_{u \in V} \dist(u,S)}
        \]
    \end{minipage}
    \par\bigskip

    Note that other definitions instead use $n := |V|$ or $1$ as the numerator.
    Regardless of the exact definition of group closeness centrality, due to the identical denominator one may equivalently minimize \emph{group farness} $ f(S) \coloneq \sum_{u \in V} \dist(u, S).$

    Note that in the literature, there are varying names used to refer to this problem.
    Some authors refer to the problem itself as \emph{group closeness centrality}~\cite{staus2023exact}, while others refer to it as \emph{group closeness maximization}~\cite{DBLP:journals/corr/abs-1710-01144}.
    We refer to the problem as \emph{group closeness centrality maximization} (\gccm), to differentiate the problem from the function.

    Let $\mathcal{U}$ be a universe of items and $f : 2^{\mathcal{U}} \rightarrow \mathbb{R}$ be a set function.
    The \emph{marginal gain} of $e \in \mathcal{U}$ at $A \subseteq \mathcal{U}$ is $\Delta(e \mid A) = f(A \cup \{e\}) - f(A)$.
    A set function is \emph{submodular} if $\Delta(e \mid A) \geq \Delta(e \mid B)$ and \emph{supermodular} if $\Delta(e \mid A) \leq \Delta(e \mid B)$ for all $A \subseteq B \subseteq \mathcal{U}$, $e \in \mathcal{U} \setminus B$.
    The group farness function is supermodular~\cite{DBLP:journals/corr/abs-1710-01144}.

    \section{Related Work}\label{sec:related-work}

    \subsection{Exact Algorithms}\label{subsec:exact-algorithms}
    To exactly solve \gccm, recent research has focused on branch-and-bound algorithms~\hbox{\cite{staus2023exact, Woydt24}} or ILP solvers~\hbox{\cite{DBLP:journals/corr/abs-1710-01144, staus2023exact}}.
    Branch-and-bound algorithms explore the space of possible solutions $S \subseteq V$ with $|S| = k$ by traversing a Set-Enumeration Tree~\cite{Rymon92}.
    Let $T$ be a node of the search tree with a \emph{working set} $S_T = \{s_1, \ldots, s_i\}$, i.e. the current subset of the solution, and a \emph{candidate set} $C_T = \{c_1, \ldots, c_m\}$, i.e. the elements which can be added to $S_T$.
    A node has $m$ children where the $i$-th child has the candidate set $C_T \setminus \{c_1, \ldots, c_{i}\}$ and the working set $S_T \cup \{c_i\}$.
    A node $T$ with a working set of size $k$ has no children at all and the root $R$ of the tree has the candidate set $C_R = V$ and working set $S_R = \emptyset$.
    The leafs of this tree hold all sets $S \subseteq V$ of size $k$ exactly once, and therefore exploring the tree guarantees to find the optimal set $S^*$ that maximizes \gccm.
    The search tree has $\mathcal{O}{ n \choose k }$ nodes, however Staus~et~al.~\cite{staus2023exact} and Woydt~et~al.~\cite{Woydt24} present methods to cut off large parts of the search tree.
    They exploit the supermodularity of group farness by computing a lower bound on the achievable score in a subtree.
    If that bound is greater than the score of the best found solution then the subtree does not hold a better solution and can be cut off.
    Let $S'$ be the best set the branch-and-bound algorithm has found so far and let $T$ be the node the algorithm is currently exploring.
    The core idea is to calculate a lower bound $b \leq f(S_T \cup C_T^*)$, where $C_T^* = \arg\min_{C' \subseteq C_T, |C'| = k - |S_T|} f(S_T \cup C')$.
    This underestimates the best possible farness in the subtree of the current node $T$.
    If $f(S') \leq b$, i.e., the underestimated bound is still greater than the best solution, then no better solution exists in the subtree of $T$, and it does not have to be explored further.
    Additionally, instead of pruning complete nodes of the search tree, they also propose a method to prune individual vertices of the candidate set.
    This reduces the number of children, and therefore the size of the search tree to be explored.
    Both of these methods are based on the supermodularity of group farness.

    Bergamini~et~al.~\cite{DBLP:journals/corr/abs-1710-01144} were the first to propose an ILP formulation for \gccm.
    For every vertex $v \in V$, a variable $y_v \in \{0, 1\}$ indicates whether $v \in S^*$ ($y_v = 1$) or $v \notin S^*$ ($y_v = 0$).
    Furthermore, they define the variables $x_{u,v} \in \{0, 1\}$ for each vertex pair $(u, v) \in V \times V$.
    They say a vertex $u$ is \emph{assigned} to a vertex $v \in S^*$ if $\dist(u, v) = \dist(u, S^*)$.
    If $v \in S^*$ and $u$ is assigned to $v$ then $x_{u,v} = 1$, otherwise $x_{u,v} = 0$.
    The ILP is then written as
    \begin{subequations}\label{ilp:gccm}
        \begin{align}
            & \sum_{v\in V} y_v = k \label{simple_ilp:sum_k} \\[-2pt]
            \text{minimize} \quad \sum_{\substack{u \in V \\ v \in V}} \dist(u,v)\,x_{u,v} \quad \text{subject to} \quad & \sum_{v\in V} x_{u,v} = 1 \quad \forall\ u\in V \label{simple_ilp:u_assigned} \\[-2pt]
            & x_{u,v} \le y_v \quad \forall\ u,v\in V \label{simple_ilp:valid_assigned}
        \end{align}
    \end{subequations}

    Constraint~\ref{simple_ilp:sum_k} ensures that exactly $k$ vertices are in the solution.
    Constraint~\ref{simple_ilp:u_assigned} ensures that each vertex $u$ is assigned to exactly one vertex, while Constraint~\ref{simple_ilp:valid_assigned} ensures that $u$ can only be assigned to a vertex $v \in S^*$.
    The objective function only adds the distance $\dist(u, v)$ if $u$ is assigned to $v$.
    Since each vertex is assigned to exactly one vertex in the solution, minimizing the term minimizes group farness.
    The ILP has $\mathcal{O}(n^2)$ variables and constraints.

    Staus~et~al.~\cite{staus2023exact} proposed further ILP formulations that reduce the number of needed variables and constraints.
    The idea for all of their new formulations is to circumvent the creation of the variables $x_{u,v}$.
    After all, the information to which vertex $v$ the vertex $u$ is assigned is not inherently relevant, only the underlying distance to $S^*$ is.
    Therefore, the variables $x_{u,v}$ are removed from the ILP and instead variables $x_{u,i} \in \{0, 1\}$ are added with $u \in V$ and $i \in \{0, \ldots, \diam(G)\}$.
    If vertex $u$ has distance $i$ to the optimal solution $S^*$, then $x_{u,i} = 1$, otherwise $x_{u,i} = 0$.
    The new ILP formulation then reads as follows:

    \begin{subequations}\label{ilp:staus}
        \begin{align}
            &  \sum_{v \in V} x_{v,0} = k  \label{smaller_ilp:sum_k} \\[-2pt]
            \text{minimize}  \sum_{\substack{u \in V \\ i \in \{0,\ldots,\diam(G)\}}} i \cdot x_{u,i} \quad \text{subject to} \quad\quad\quad & \hspace{-0.8cm}\sum_{i \in \{0, \ldots, \diam(G)\}} x_{v,i} = 1 \quad \forall\, v \in V     \label{smaller_ilp:one_distance} \\[-2pt]
            & \hspace{-1.38cm}x_{v,i} \le \sum_{\substack{w \in V \\ \dist(v,w) = i}} x_{w, 0} \quad \forall\, \substack{v \in V \\ i \in \{0,\ldots,\diam(G)\}}  \label{smaller_ilp:valid_distance}
        \end{align}
    \end{subequations}

    Constraints~\ref{smaller_ilp:sum_k} and~\ref{smaller_ilp:one_distance} encode that the solution has exactly $k$ vertices and each vertex has only one active distance to $S^*$.
    Constraint~\ref{smaller_ilp:valid_distance} ensures that the active distance from $v$ to $S^*$ is chosen correctly, i.e., $x_{v,i}$ can only be 1 if there exists at least one $w \in S^*$ with distance $i = \dist(v, w)$.
    After optimization, $S^* = \{v \mid x_{v, 0} = 1\}$ is a globally optimal solution.
    This ILP has $\mathcal{O}(n \cdot \diam(G))$ variables and constraints, making it inherently well suited for small diameter networks like social networks.
    The ILP can further be reduced in size by removing unnecessary variables.
    Instead of using $i \in \{0, \ldots, \diam(G)\}$, one can instead use $i \in \{0, \ldots, \ecc(v)\}$ for each individual vertex $v$ since $\dist(v, S^*) \leq \ecc(v)$.

    However, Staus~et~al.~\cite{staus2023exact} still noticed that many $x_{v,i}$ are not relevant for the ILP\@.
    For example, if a solution with $x_{v,i} = 1$ was determined, then all variables $x_{v,j}$ with $j > i$ do not impact the objective function.
    An ILP without those variables will determine the same solution in less time.
    This led them to design an iterative algorithm called \textsc{ILPind}~\cite{staus2023exact}, which is the state-of-the-art algorithm to exactly solve \gccm.
    The idea of the algorithm is to start with a small formulation of the ILP which might not be able to find the optimal solution - in which case the ILP is iteratively enlarged until this is the case.
    Instead of creating all variables $x_{v, i}$ with $i \in \{0, \ldots, \ecc(v)\}$, they only create the variables with $i \in \{0, \ldots, d(v) = 2\}$.
    The variables $x_{v, i}$ with $i < d(v)$ keep their original meaning, however the variables $x_{v, d(v)}$ change their meaning.
    If $x_{v, d(v)} = 1$ then $\dist(v, S^*) \geq d(v)$, i.e., the distance from $v$ to $S^*$ may be greater than $d(v)$, and therefore it is not guaranteed that $S^*$ is optimal.
    If the ILP is solved and there is a $x_{v, d(v)} = 1$ with $d(v) < \ecc(v)$, then the ILP is called \emph{insufficient} and another iteration is needed\footnote{Note that this notion of sufficiency is relative to the solution found. If several optimal solutions exist, the ILP might be sufficient with respect to one solution, but insufficient with respect to another.}.
    For each vertex $v$ with $x_{v, d(v)} = 1$ the value $d(v)$ is increased by 1 but capped at $\ecc(v)$.
    The ILP is solved with the new $d(v)$-values and this is repeated until eventually a sufficient ILP is encountered.
    While not obviously advantageous over solving the original ILP once, experiments in~\cite{staus2023exact} show it is significantly faster.

    Another improvement they utilize is the use of dominating vertices.
    Remember that a vertex $v$ is dominated by a vertex $u$ if $N[v] \subseteq N[u]$.
    Staus et al. \cite{staus2023exact} show that a set of dominated vertices $D$ can be removed from the search space, i.e., there exists an optimal solution $S^*$ in $V \setminus D$.
    More precisely, for all vertices $v \in V$ it must hold that $v \in V \setminus D$ or $v$ is dominated by a vertex in $V \setminus D$.
    Let $D \subseteq V$ be the set of all vertices removed from the search space according to this rule.
    Then, for all vertices $v \in D$, the ILP does not need the variables $x_{v, 0}$.
    Note that $V \setminus D$ is constructed to have at least size $k$.

    \subsection{Heuristic and Approximate Algorithms}\label{subsec:heuristic-algorithms}
    Due to the NP-hardness of \gccm, many authors have proposed heuristic algorithms.
    Chen~et~al.~\cite{chen2016efficient} proposed a simple greedy heuristic, which we call \greedy.
    Essentially, \greedy\ starts with the empty set $S_0 = \emptyset$ and iteratively increases $S_{i+1} = S_i \cup \{v\}$ with the vertex $v$ that yields the greatest marginal gain. %
    Then, the computed solution is $S_k$.
    \greedy\ first computes the $\mathcal{O}(n^2)$ size distance matrix in $\mathcal{O}(n(n+m))$ time by BFS from each vertex.
    To determine $v$ takes $\mathcal{O}(n^2)$ time in each of the $k$ iterations.
    Chen~et~al.~\cite{chen2016efficient} claimed that \greedy\ has an approximation ratio of $(1-1/e)$ by using a well-known result for submodular set functions~\cite{nemhauser1978analysis}.
    Bergamini~et~al.~\cite{DBLP:journals/corr/abs-1710-01144} however pointed out that group closeness centrality is not submodular and thus showed that this approximation proof strategy does not work.
    Consequently, the question of whether \gccm\ could be approximated was considered unsettled~\cite{angriman2021group}.
    We later show that \greedy\ does not have an approximation guarantee.
    However, Bergamini~et~al.~\cite{DBLP:journals/corr/abs-1710-01144} observed very high empirical solution quality and proposed \greedy++, an improved algorithm that computes the same solution as \greedy\ while using less time and space.
    They only maintain a vector storing the distance from each vertex to the current set $S_i$ and use $n$ BFS queries per iteration, many of which can be terminated early or skipped entirely.

    Angriman~et~al.~\cite{DBLP:journals/corr/abs-1911-03360} proposed two local search algorithms, \textsc{LocalSwap} and \textsc{GrowShrink}, with no known approximation guarantee.
    Both start by uniformly picking an initial solution~$S$.
    \textsc{LocalSwap} operates by swapping vertices $v \in S$ with their neighbors $u \in N(v)$ if it is expected that the group farness decreases.
    Instead of querying $f(\{u\} \cup S \setminus \{v\})$ for each possible swap, they heuristically determine good pairs of vertices, and keep auxiliary data structures to speed up the query of $f$.
    If no swap is found, then the algorithm terminates and returns the current set.
    Their second algorithm, \textsc{GrowShrink}, operates in two distinct phases, the \emph{grow}- and \emph{shrink}-phase.
    During the grow-phase the algorithm adds additional vertices and temporarily violates the constraint $|S| = k$.
    In the shrink-phase the algorithm discards vertices such that the constraint is again fulfilled.
    The idea is that the grow-phase adds beneficial vertices, while the shrink-phase then discards the least beneficial ones.

    In a later work, Angriman~et~al.~\cite{angriman2021group} propose an algorithm with a \hbox{1/5-approximation} guarantee, thus settling the approximability question of \gccm.
    The local search algorithm is based on an algorithm for the related $k$-median problem analyzed in Arya~et~al.~\cite{arya2001local}.
    They start with any solution $S$ and then perform swaps $(S \setminus \{u\}) \cup \{v\}$ with $u \in S$ and $v \in V \setminus S$ if the swap is beneficial.
    Once all possible swaps do not improve the objective anymore, the algorithm terminates.
    They show that the result of Arya et al.~\cite{arya2001local} applies, i.e. for the output set $S$ it holds that $c(S) \geq \frac15c(S^*)$ (i.e. $f(S) \leq 5 f(S^*)$) with $S^*$ being an optimal solution.
    Since the approximation guarantee does not depend on the initial solution, different algorithms to obtain the initial solutions can be chosen.
    Angriman~et~al. propose two algorithms: \textsc{Greedy-Ls-C} and \textsc{GS-LS-C}, where the first one uses the solution provided by the \greedy\ algorithm and the second one utilizes the \textsc{GrowShrink} algorithm.
    Interestingly, \textsc{Greedy-Ls-C}'s and \textsc{GS-LS-C}'s solution quality are very similar.

    Finally, Rajbhandari~et~al.~\cite{rajbhandari2022presto} proposed an anytime algorithm called \textsc{Presto} that finds heuristic solutions and, when run until termination, exact solutions. %

    \section{Grover} \label{sec:grover}
    The main contribution of this work is our exact algorithm \textsc{Grover}\footnote{\textbf{Gro}up Closeness Centrality Maximization Sol\textbf{ver}}.
    It addresses two structural limitations of \textsc{ILPind}~\cite{staus2023exact}: the ILP contains many variables whose values are fully determined by other variables, and the iterative framework wastes many iterations before the ILP becomes sufficient.
    We resolve the first limitation through absorbed vertices, a new reduction that eliminates variables by folding their cost into their neighbor.
    The second is resolved by bootstrapping the ILP with an approximate solution, producing near-sufficient formulations from the first iteration.

    \textbf{Recap.}
    Recall that the algorithm \textsc{ILPind} from Staus~et~al.~\cite{staus2023exact} starts by solving the ILP with variables $x_{v, i}$ for $v \in V$ and $i \in \{0, \ldots, d(v) = 2\}$.
    This way, solutions with $\dist(v,S^*) \leq 1$ or $\dist(v,S^*) = 2$ and $\ecc(v) = 2$ can be represented by the ILP\@.
    For instance, if a dominating set of size $k$ exists, then this set is a solution which can be represented by the initial ILP\@.
    After the model is optimized with the current $d(v)$-values, we may find that $x_{v,d(v)} = 1$ for some vertices $v$, indicating that $\dist(v,S^*) \geq d(v)$.
    If, in addition, $\ecc(v) > d(v)$, then we cannot rule out the possibility that $\dist(v,S^*) > d(v)$.
    For such vertices $v$, the value $d(v)$ is incremented by 1 to make sure that a possible solution $S^*$ with $\dist(v,S^*) > d(v)$ can be represented by the ILP model.
    This process repeats until eventually a sufficient ILP is obtained.

    \subsection{Reducing Iterations}\label{subsec:reducing-iterations}
    The iterative framework of \textsc{ILPind} faces a fundamental problem.
    Setting $d(v) = 2$ produces a small ILP, but on graphs with a moderate to large diameter, the first ILPs cannot represent any optimal solution and many iterations are wasted incrementally enlarging the model.
    Conversely, setting $d(v) = \ecc(v)$ immediately yields a sufficient ILP, but it has many unnecessary variables that slow down the solver.
    Neither extreme is satisfactory: the former pays in iteration count, the latter in per-iteration cost.

    We resolve this problem by using an estimate $2 \leq \widetilde{d}(v) \leq \ecc(v)$ for each vertex $v$.
    The closer these estimates are to the final $d(v)$-values needed to represent an optimal solution, the fewer iterations are needed.
    If indeed $\widetilde{d}(v) = \dist(v, S^*) + 1$ and a unique optimal solution $S^*$ exists, the ILP is sufficient in the first iteration with as few variables as possible.
    Note that if multiple optimal solutions exist, then additional iterations might be necessary to verify that no other optimal solution is strictly better.
    The key question is how to obtain good estimates $\widetilde{d}(v)$ without knowing $S^*$.
    Heuristic algorithms for \gccm\ produce solutions $\widetilde S$ empirically very close to optimal on real-world graphs~\cite{angriman2021group, angriman2023algorithms, DBLP:journals/corr/abs-1710-01144}, so $\dist(v, \widetilde{S})$ closely tracks the unknown $\dist(v, S^*)$ and can be used to determine $\widetilde{d}(v)$.
    Concretely, we derive $\widetilde{S}$ using \textsc{GS-LS-C} from Angriman~et~al.~\cite{angriman2023algorithms} due to its efficiency and 5-approximation and set $\widetilde{d}(v) = \max\{\dist(v, \widetilde{S}) + 1, 2\}$ for the first iteration.
    While we use \textsc{GS-LS-C} for its efficiency and approximation guarantee, any algorithm producing an initial solution $\widetilde{S}$ can be used, since the iterative enlargement of the ILP ensures that optimality is guaranteed regardless of the quality of $\widetilde{S}$.

    \subsection{Absorbed Vertices}\label{subsec:absorbed-vertices}
    Staus~et~al.~\cite{staus2023exact} showed that a set $D \subseteq V$ of dominated vertices can be removed from the search space without compromising the optimality guarantee.
    This is expressed by removing the variables $x_{v, 0}$ for each $v \in D$ from the ILP\@.
    However, this reduction only affects the search space, as distance variables $x_{v, 1}, \ldots, x_{v, d(v)}$ and their constraints remain, since they are needed to compute the objective function, and the size of the ILP is largely unchanged.

    Absorbed vertices go fundamentally further by eliminating \emph{all} variables of a vertex from the ILP\@.
    The key insight is that certain vertices have their distance to any optimal solution fully determined by a neighbor's distance.
    This allows us to fold their associated costs into the neighbor and completely remove them from the ILP\@.
    For example, consider a vertex $v$ with $\deg(v) = 1$ and let $u$ be its neighbor.
    Since all shortest paths from $v$ to $S^*$ must pass through the edge $\{u, v\}$, we know $\dist(v, S^*) = \dist(u, S^*) + 1$.
    We therefore can deduce $x_{u, i} = x_{v, i+1}$ in any solved ILP\@.
    Since we are able to deduce the assignment of $x_{v, i+1}$ from $x_{u,i}$ we can remove it from the ILP\@.
    However, removing the variable alters the objective function of the ILP\@.
    Therefore, the cost associated with $x_{v, i+1}$ must be folded into the cost of $x_{u, i}$, such that the same function is still computed.
    The variable $x_{u, i}$ also has to capture the distance to $v$, so instead of associating a cost of $i$ with it, a cost of $2i + 1$ is now associated with the variable.
    In this case, we say that a vertex $u$ \emph{absorbs} a vertex $v$ and, in general, a vertex $u$ may absorb $\alpha(u) \in \mathbb{N}$ many neighbors.
    In that case, $u$ can fold the cost of these vertices into it by associating the cost $\alpha(u)(i + 1) + i$ with $x_{u, i}$ in the objective function.
    The set $A \subseteq V$ denotes all absorbed vertices in $G$.

    There are even more cases when a vertex $u$ can absorb its neighbors.
    Let $u$ be a cut vertex, i.e., $G - u$ has multiple connected components.
    If there is a component $C = \{w_1, \ldots, w_\ell\} \subseteq D$ in $G-u$ such that all its vertices are dominated by $u$, then $u$ can absorb $w_1, \ldots, w_\ell$.
    In that case, each shortest path from $w_1, \ldots, w_\ell$ to $S^*$ must pass through $u$ and the same argument from above applies.
    Figure~\ref{fig:absorbed-vertices} shows multiple cases where vertices can and cannot be absorbed.
    A proof that this search space reduction is correct is given in the next section.

    \begin{figure}
        \centering
        \begin{minipage}[t]{0.32\textwidth}
            \centering
            \begin{tikzpicture}[
    vertex/.style={
        circle,
        draw,
        fill=#1,
        inner sep=0pt,
        minimum size=3.5mm
    }
]
    \def\colorb{blue!60}
    \def\colorr{red!60}
    \def\colorblack{black!60}

    \def\vdist{0.8}

    \node[vertex=\colorr]      (v0) at (0,  \vdist) {};
    \node[vertex=\colorb]      (v1) at (0,  0) {};
    \node[vertex=\colorblack]  (v2) at (0, -\vdist) {};
    \node[vertex=\colorb]  (v3) at (0, -2*\vdist) {};
    \node[vertex=\colorr]  (v4) at (-0.5*\vdist, -3*\vdist) {};
    \node[vertex=\colorr]  (v5) at (0.5*\vdist, -3*\vdist) {};

    \draw (v0) -- (v1);
    \draw (v1) -- (v2);
    \draw (v2) -- (v3);
    \draw (v3) -- (v4);
    \draw (v3) -- (v5);
    \draw (v4) -- (v5);
\end{tikzpicture}
            \par\smallskip
            (a)
        \end{minipage}\hfill
        \begin{minipage}[t]{0.32\textwidth}
            \centering
            \begin{tikzpicture}[
    vertex/.style={
        circle,
        draw,
        fill=#1,
        inner sep=0pt,
        minimum size=3.5mm
    }
]
    \def\colorb{blue!60}
    \def\colorr{red!60}
    \def\colorblack{black!60}

    \def\vdist{0.8}
    \def\rad{1.0} %

    \node(x) at (0,  0) {};
    \node[vertex=\colorblack]      (v0) at (0,  \vdist) {};
    \node[vertex=\colorb]      (v1) at (0,  3*\vdist) {};
    \node     (y1) at (-0.5*\vdist,  0) {...};
    \node     (y2) at (0.5*\vdist,  0) {...};

    \node[vertex=\colorr] (v2) at ($(v1) + (180:\rad)$) {};
    \node[vertex=\colorr] (v3) at ($(v1) + (120:\rad)$) {};
    \node[vertex=\colorr] (v4) at ($(v1) + ( 60:\rad)$) {};
    \node[vertex=\colorr] (v5) at ($(v1) + (  0:\rad)$) {};
    \node[vertex=\colorr] (v6) at ($(v1) + (  -60:\rad)$) {};
    \node[vertex=\colorr] (v7) at ($(v1) + (  -120:\rad)$) {};

    \draw (v0) -- (v1);
    \draw (v0) -- (y1);
    \draw (v0) -- (y2);

    \draw (v1) -- (v2);
    \draw (v1) -- (v3);
    \draw (v1) -- (v4);
    \draw (v1) -- (v5);
    \draw (v1) -- (v6);
    \draw (v1) -- (v7);

    \tikzset{spaced dashed/.style={dash pattern=on 3pt off 8pt}}

    \draw[spaced dashed] (v2) -- (v3);
    \draw[spaced dashed] (v2) -- (v4);
    \draw[spaced dashed] (v2) -- (v5);
    \draw[spaced dashed] (v2) -- (v6);
    \draw[spaced dashed] (v2) -- (v7);

    \draw[spaced dashed] (v3) -- (v4);
    \draw[spaced dashed] (v3) -- (v5);
    \draw[spaced dashed] (v3) -- (v6);
    \draw[spaced dashed] (v3) -- (v7);

    \draw[spaced dashed] (v4) -- (v5);
    \draw[spaced dashed] (v4) -- (v6);
    \draw[spaced dashed] (v4) -- (v7);

    \draw[spaced dashed] (v5) -- (v6);
    \draw[spaced dashed] (v5) -- (v7);

    \draw[spaced dashed] (v6) -- (v7);
    \node[vertex=\colorb] at (v1) {};
\end{tikzpicture}
            \par\smallskip
            (b)
        \end{minipage}\hfill
        \begin{minipage}[t]{0.32\textwidth}
            \centering
            \begin{tikzpicture}[
    vertex/.style={
        circle,
        draw,
        fill=#1,
        inner sep=0pt,
        minimum size=3.5mm
    }
]
    \def\colorb{blue!60}
    \def\colorr{red!60}
    \def\colorblack{black!60}

    \def\vdist{0.8}

    \node[vertex=\colorblack]      (v0) at (0,  \vdist) {};
    \node[vertex=\colorblack]      (v1a) at (-0.5*\vdist,  0) {};
    \node[vertex=\colorblack]      (v1b) at (0.5*\vdist,  0) {};
    \node[vertex=\colorblack]  (v2) at (0, -\vdist) {};
    \node[vertex=\colorblack]  (v3a) at (-0.5*\vdist, -2*\vdist) {};
    \node[vertex=\colorblack]  (v3b) at (0.5*\vdist, -2*\vdist) {};
    \node[vertex=\colorblack]  (v4) at (-0.5*\vdist, -3*\vdist) {};
    \node[vertex=\colorblack]  (v5) at (0.5*\vdist, -3*\vdist) {};

    \draw (v0) -- (v1a);
    \draw (v0) -- (v1b);
    \draw (v1a) -- (v2);
    \draw (v1b) -- (v2);
    \draw (v2) -- (v3a);
    \draw (v2) -- (v3b);
    \draw (v3a) -- (v3b);
    \draw (v3a) -- (v4);
    \draw (v3b) -- (v5);
    \draw (v4) -- (v5);
\end{tikzpicture}
            \par\smallskip
            (c)
        \end{minipage}

        \caption{Figure showing absorbed vertices. In case (a) and (b) the red vertices can be absorbed by their neighboring blue vertex. In case (b) any combination of the dashed edges results in the blue vertex absorbing its neighbors. In case (c) no vertex can be absorbed.}
        \label{fig:absorbed-vertices}
    \end{figure}

    \textbf{Runtime Analysis.}
    The dominated vertices can be determined in $\mathcal{O}(m\Delta(G))$ time by sorting each vertex's neighborhood and performing pairwise linear scans.
    To compute the set of absorbed vertices $A$, Tarjan's~\cite{Tarjan72} $\mathcal{O}(n + m)$ biconnectivity algorithm identifies cut vertices, and for each cut vertex we check if it dominates a complete component.
    Since the vertex has at most $\Delta(G)$ neighbors and we already computed the domination relation between all neighboring vertices, this costs at most $\mathcal{O}(\Delta(G))$ time per vertex.
    For at most $n$ cut-vertices this takes at most $\mathcal{O}(n \Delta(G))$ time.
    In total, a runtime of $\mathcal{O}(m\Delta(G))$ is required.

    \subsection{The Complete Algorithm}\label{subsec:the-complete-algorithm}
    \textsc{Grover} starts by determining the set of dominated vertices $D$ and the set of absorbed vertices $A \subseteq D$.
    Afterwards, \textsc{GS-LS-C}~\cite{angriman2023algorithms} is used to compute an initial solution $\widetilde{S}$.
    The values $\widetilde{d}(v) = \max\{\dist(v, \widetilde{S}) + 1, 2\}$ are determined for each vertex $v$ for the first iteration of our ILP\@.
    The ILP then reads as follows:

    \begin{subequations}\label{ilp:reduced}
        \begin{align}
            & \sum_{v \in V \setminus D} x_{v,0} = k \label{reduced_ilp:sum_k} \\[-2pt]
            \min\ \sum_{\substack{v \in V \setminus A \\ i \in \{0,\ldots,\widetilde{d}(v)\}}} x_{v,i}\,\bigl(\alpha(v)\cdot(i+1)+i\bigr) \quad \text{s. t.} \quad & \hspace*{-0.33cm} \sum_{i \in \{0,\ldots,\widetilde{d}(v)\}} x_{v,i} = 1 \quad \forall\, v \in V \setminus A \label{reduced_ilp:one_distance} \\[-2pt]
            & \hspace*{-0.25cm} \sum_{\substack{w \in V \setminus D \\ \dist(v,w)=i}} x_{w,0} \geq x_{v,i} \quad \forall\, \substack{v \in V \setminus A \\ i \in \{0,\ldots,\widetilde{d}(v)\}}
            \label{reduced_ilp:valid_distance}
        \end{align}
    \end{subequations}

    Note that in the worst case, the reduced ILP still has the same number of variables and constraints as \textsc{ILPind}, namely $\mathcal{O}(n \cdot \diam(G))$.
    If, after solving the ILP, we encounter $x_{v, \widetilde{d}(v)} = 1\ \land\ \ecc(v) > \widetilde{d}(v)$ for any $v \in V \setminus A$, then the ILP may be insufficient to represent an optimal solution.
    In this case, we increase $\smash{\widetilde{d}(v) \gets \widetilde{d}(v) + 1}$ for all vertices that did not pass the check and solve the ILP again with the new $\widetilde{d}(v)$-values.
    This is repeated until the ILP is sufficient and then the optimal solution $S^* = \{v \in V\setminus D \mid x_{v, 0} = 1\}$ can be extracted.

    For high-level pseudocode of the full algorithm, see Algorithm~\ref{alg:grover}.
    Lines $2$ and $3$ handle the case for $k=1$, where the \greedy\ algorithm returns the exact solution.
    In Line 4, the dominated and absorbed vertex sets are determined and in Line 5, the heuristic solution is computed using \textsc{GS-LS-C}~\cite{angriman2023algorithms}.
    This is used to estimate $\widetilde{d}(v)$ for each vertex $v$ in Lines $6$ and $7$.
    The main work of the algorithm occurs in the loop from Lines $8$ to $16$.
    In Line $9$ the reduced ILP is solved, and Lines $10$ to $14$ check if the ILP is sufficient.
    for each vertex $v \in V \setminus A$ it is checked whether the ILP is insufficient and if so the corresponding $\widetilde{d}(v)$-value is incremented (Line 14).
    Only when all vertices passed the check, will the algorithm return the solution in Line $16$.

    \begin{algorithm}[H]
    \caption{\textsc{Grover}}\label{Grover}
    \begin{algorithmic}[1]
            \Function{solve($G$: Graph, $k: \mathbb{N}$)}{}
                \If{$k = 1$}
                    \State \textbf{return} solution of \textsc{Greedy} with $k=1$ \Comment{\textsc{Greedy} is exact for $k = 1$}
                \EndIf
                \State compute $D, A$ and $\{\alpha(v)\}_{v \in V\setminus A}$\label{lst:line:comp_A}
                \State compute approximate solution $\widetilde S$ \Comment{Using  \textsc{GS-LS-C}}
                \For {$v \in V\setminus A$}
                    \State $\widetilde d(v) \gets \max\{2,\dist(v, \widetilde S) + 1\}$ \Comment{See Section \ref{subsec:reducing-iterations}}
                \EndFor
                \While {\emph{true}}
                    \State
                    solve reduced ILP using $D,A, \{\widetilde d(v)\}_{v \in V\setminus A}$ and $\{\alpha(v)\}_{v \in V\setminus A}$ \Comment{See Section \ref{subsec:absorbed-vertices}}
                    \State ilpSufficient $ \gets $ \emph{true}
                    \For {$v \in V\setminus A$}
                        \Comment{Check if the ILP is sufficient}
                        \If{$x_{v,\widetilde d(v)} = 1$ and $\widetilde d(v) < \ecc(v)$}
                            \Comment{ is $\widetilde d(v)$ insufficient for $v$?}
                            \State ilpSufficient $ \gets $ \emph{false}
                            \State $\widetilde d(v) \gets \widetilde d(v) + 1$
                        \EndIf
                    \EndFor
                    \If{ilpSufficient}
                        \State \textbf{return} $\{v \in V\setminus D \mid x_{v,0} = 1\}$
                    \EndIf
                \EndWhile
            \EndFunction
    \end{algorithmic}
    \label{alg:grover}
    \end{algorithm}

    \textbf{Proof of Correctness.}
    To prove the correctness of our algorithm, we show that the new ILP formulation indeed finds an optimal set $S^*$.
    We accomplish this by showing that the new ILP formulation and the \textsc{ILPind} formulation (see Section~\ref{subsec:exact-algorithms}) optimize the same function.
    Additionally, we show that we can infer the variable assignment of one ILP based on the variable assignment of the other.
    Since \textsc{ILPind} computes an optimal set $S^*$ (see~\cite{staus2023exact} for a proof), so will our algorithm.
    Note that reducing the number of iterations by estimating the initial $\widetilde d(v)$-values does not impact the correctness guarantee.

    \begin{theorem}
        Let a reduced optimized ILP be sufficient, i.e., after optimizing the ILP it holds that $x_{v,\widetilde d(v)} = 0$ with $\widetilde d(v) < \ecc(v)$ or $x_{v, \widetilde d(v)} = 1$ with $\widetilde d(v) \geq \ecc(v)$.
        Then $S^* = \{v \in V \setminus D \mid x_{v, 0} = 1\}$ is a globally optimal solution.
    \end{theorem}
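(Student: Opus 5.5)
The plan is to reduce the statement to the known correctness of \textsc{ILPind} (with dominated vertices removed, as in Staus~et~al.~\cite{staus2023exact}). I would set up a correspondence between feasible assignments of the reduced ILP and feasible assignments of the \textsc{ILPind} formulation that uses the same $D$ and the following cutoffs. For $v \in V \setminus A$ the cutoff is $d(v) = \widetilde d(v)$. For an absorbed $w$ with absorber $u$ it is $d(w) = \widetilde d(u) + 1$, capped at $\ecc(w)$. Corresponding assignments should have equal objective values, and sufficiency should transfer between them. Optimality and the final sufficiency test of \textsc{ILPind} then imply that $S^*$ is optimal.

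\textbf{Step 1: a structural lemma.} Let $u$ absorb $w$. Then $w$ lies in a component $C$ of $G-u$ all of whose vertices are dominated by $u$; a leaf is the special case $C=\{w\}$. I would show that for every $S \subseteq V \setminus D$ with $S \neq \emptyset$ we have $\dist(w,S) = \dist(u,S)+1$.
\begin{itemize}
\item Since $C \subseteq D$, no vertex of $C$ lies in $S$.
\item Every path from $w$ to a vertex outside $C \cup \{u\}$ passes through $u$, because $u$ is a cut vertex separating $C$.
\item $w \in C$ is dominated by $u$, so $w$ is adjacent to $u$.
\end{itemize}
Together these give $\dist(w,S) = 1 + \dist(u,S)$. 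I would also check that $\alpha(u)$ is well defined, i.e.\ that each absorbed vertex is charged to exactly one absorber. This holds if $A$ is built from components of distinct cut vertices that do not overlap, and the plan is to argue it from the way $A$ is constructed.

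\textbf{Step 2: transfer of assignments.} Given a reduced solution, define $x_{w,i+1} := x_{u,i}$ for each absorbed $w$ with absorber $u$, and keep all other variables unchanged. Constraint~\ref{reduced_ilp:one_distance} carries over directly. The cost $\alpha(u)(i+1)+i$ equals $u$'s own cost $i$ plus $\alpha(u)$ copies of $(i+1)$, so the objectives agree.

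Validity constraint~\ref{smaller_ilp:valid_distance} for $w$ at level $i+1 < d(w)$ needs some vertex of $S$ at distance exactly $i+1$ from $w$. Any $s \in S$ with $\dist(u,s) = i$ works: it lies outside $C$ by Step 1, so $\dist(w,s) = i+1$. For the top variable there is a subtlety. In \textsc{ILPind}, $x_{v,d(v)}$ only means ``distance at least $d(v)$'', and the reduced ILP encodes it in the same way. Equivalently, one can simply drop the validity constraints at the top level, and then the transfer is immediate.

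Conversely, every \textsc{ILPind} solution restricted to $V \setminus A$ is feasible for the reduced ILP. Its cost is at least the reduced cost, because the absorbed distances are forced by Step 1. Hence the two optima coincide.

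\textbf{Step 3: sufficiency and the main obstacle.} Suppose the reduced ILP is sufficient as in the theorem, and let $u$ absorb $w$.
\begin{itemize}
\item If $x_{u,\widetilde d(u)} = 0$, then $x_{w,\widetilde d(u)+1} = 0$.
\item If instead $\widetilde d(u) \ge \ecc(u)$, I would use $\ecc(w) \le \ecc(u)+1$, which holds since $w$ is adjacent to $u$, to obtain $d(w) \ge \ecc(w)$.
\end{itemize}
So the induced \textsc{ILPind} assignment is sufficient, and by the correctness of \textsc{ILPind} the set $S^*$ is globally optimal.

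I expect the main obstacle to be the careful handling of the top-level variable $x_{v,\widetilde d(v)}$, which carries the ``$\ge$'' meaning, together with the cap at $\ecc$. One concern is a possible mismatch when $\widetilde d(u)+1 > \ecc(w)$. The other is making sure that the constraint~\ref{reduced_ilp:valid_distance} written at $i = \widetilde d(v)$ does not wrongly forbid a solution in which the distance strictly exceeds $\widetilde d(v)$. The first thing I would try is to interpret that constraint, as in \textsc{ILPind}, as ``distance at least $i$'' at the top level, and to verify that both formulations treat it identically under the shift $i \mapsto i+1$.
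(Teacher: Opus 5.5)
Your proposal is correct and follows essentially the same route as the paper: set $d(w)=\widetilde d(\rho(w))+1$ for absorbed $w$, shift $\widetilde x_{\rho(w),i}$ to $x_{w,i+1}$, check feasibility, equal objective values and sufficiency, rule out a strictly better \textsc{ILPind} solution by restricting it to $V\setminus A$, and invoke the correctness of \textsc{ILPind}, with your Step~1 lemma and the bound $\ecc(w)\le\ecc(\rho(w))+1$ making explicit what the paper leaves to ``the reduction rules'' and ``sufficient by construction''. Two small fixes: drop the cap at $\ecc(w)$ (the paper does not use it, and without it the transfer is exact), and apply the converse cost comparison to an optimal, level-minimal \textsc{ILPind} solution, because for an arbitrary feasible one $\rho(w)$ may be assigned a level above its true distance, and then the shift $\hat x_{w,i+1}=\hat x_{\rho(w),i}$, and with it your inequality, can fail.
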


    \begin{proof}
        Let $\{\widetilde x_{v,i} \mid 0 \leq i \leq \widetilde d(v)\}_{v \in V\setminus A}$ be the variables of the reduced model with values $\{\widetilde{d}(v)\}_{v \in V \setminus A}$ of the last iteration.
        By the definition of our algorithm, these values correspond to an optimum solution of the ILP and are sufficient, i.e., for each $v \in V\setminus A$, we have $\widetilde x_{v,\widetilde d(v)} = 0$ or both $\widetilde x_{v,\widetilde d(v)} = 1$ and $\ecc(v) \leq \widetilde d(v)$.

        Consider the original ILP model used in \textsc{ILPind}, for which we introduce variables \hbox{\smash{$\{ x_{v,i} \mid 0 \leq i \leq d(v)\}_{v \in V}$}}.
        Let $d(v) = \widetilde{d}(v)$ if $v \in V \setminus A$ and $d(v) = \widetilde{d}(\rho(v))+1$ if $v \in A$ where $\rho(v)$ denotes the vertex which absorbed $v$.
        We assign
        \begin{align}
            \label{var_transform}
            x_{v,i} = \begin{cases}
                          0 & v \in A, i = 0 \\
                          \widetilde x_{ \rho(v), i-1} & v \in A, i \geq 1 \\
                          \widetilde x_{v,i} & v \in V \setminus A
            \end{cases}  \quad\quad \quad \text{for }  0 \leq i \leq d(v).
        \end{align}
        This construction yields a feasible solution of the original ILP model (which is also sufficient by construction).
        It remains to show that this variable assignment corresponds to an optimum solution.
        First note that by the reduction rules, the identity
        \begin{align}
            \label{redu_identity}
            \sum_{v \in V\setminus A} \ \sum_{ i \in\{0,\dots,\widetilde d(v)\}} \alpha(v)\cdot (i+1) \cdot \widetilde x_{v,i} = \sum_{v \in A} \ \sum_{ i \in\{0,\dots,\widetilde d( \rho(v))\}}  (i+1) \cdot \widetilde x_{ \rho(v),i}
        \end{align}
        holds.
        Roughly speaking, this identity expresses that the absorbed cost can be counted either by considering the absorbing vertices, or by considering the absorbed vertices.
        For the objective values $\widetilde f_*$ of the solution of the reduced ILP, and $f_*$ of the ILP used in \textsc{ILPind}, we then get
        \begin{equation}
            \begin{aligned}
                \widetilde f_* &= \sum_{v \in V\setminus A} \ \sum_{ i \in\{0,\dots,\widetilde d(v)\}} (\alpha(v)\cdot (i+1)+i) \cdot \widetilde x_{v,i}
                \\
                &= \sum_{v \in V\setminus A} \ \sum_{ i \in\{0,\dots,\widetilde d(v)\}} i \cdot \widetilde x_{v,i} + \sum_{v \in V\setminus A} \ \sum_{ i \in\{0,\dots,\widetilde d(v)\}} \alpha(v)\cdot (i+1) \cdot \widetilde x_{v,i}
                \\
                &\overset{\eqref{redu_identity}}=
                \sum_{v \in V \setminus A} \ \sum_{ i \in\{0,\dots,\widetilde d(v)\}} i \cdot \widetilde x_{v,i}+ \sum_{v \in A}\sum_{ i \in\{0,\dots,\widetilde d( \rho(v))\}}   (i+1) \cdot\widetilde x_{ \rho(v),i}
                \\
                &\overset{(\star)}=\sum_{v \in V \setminus A} \ \sum_{ i \in\{0,\dots,d(v)\}} i \cdot x_{v,i}+ \sum_{v \in A}\sum_{ i \in\{0,\dots,d(v)\}}   i \cdot x_{v,i}
                \\
                &=\sum_{v \in V} \ \sum_{ i \in\{0,\dots,d(v)\}} i \cdot  x_{v,i} = f_*
            \end{aligned}\label{f_transform}
        \end{equation}
        where ($\star$) follows by substituting in $d(v)$ and $\eqref{var_transform}$ together with an index shift.
        Hence, the constructed solution of the original ILP used in \textsc{ILPind} has the same objective value $f_* = \widetilde f_*$ as the solution of the reduced ILP\@.
        However, we have yet to show that no strictly better solution of the original ILP exists.
        Therefore, assume for a contradiction that the constructed solution of the original model is \emph{not} optimum, i.e., there is a feasible assignment of the variables of the original model with objective value $f'<f_*$.
        We refer to the corresponding variable assignment as $\{ \hat x_{v,i} \mid 0 \leq i \leq d(v)\}_{v \in V}$.
        We argue that from this \emph{better} solution, we can construct a \emph{better} solution of the reduced ILP, contradicting its optimality.
        Indeed, using $\widetilde x_{v,i} = \hat x_{v,i}$ for all $v \in V\setminus A$ and $0 \leq i \leq d(v)$ as an assignment for the reduced ILP, we obtain a feasible solution of the reduced ILP\@.
        Applying the same rearrangements as in $\eqref{f_transform}$, we conclude that this solution has objective value $f'< \widetilde f_*$, a contradiction to the optimality of the solution of the reduced ILP\@.
    \end{proof}

    \section{Approximation Results}\label{sec:approximation-results}
    In this section, we present two results concerning the approximation of \gccm.
    The first one is the use of dominating vertices in \textsc{GS-LS-C}~\cite{angriman2023algorithms}, a $1/5$-approximation algorithm.
    The second result is a proof that the \greedy\ algorithm does not have an approximation guarantee.

    \textbf{Dominating Vertices in \textsc{GS-LS-C}.}
    Due to the NP-hardness of \gccm, heuristic and approximation algorithms are the only viable option to compute solutions on graphs with millions of vertices.
    Angriman~et~al.~\cite{angriman2021group} present an approximation algorithm that is based on a $5$-approximation algorithm~\cite{arya2001local} for the $k$-median problem.
    They show that the algorithm translates to a $5$-approximation of group farness and therefore a $1/5$-approximation for \gccm.
    The algorithm is based on objective-improving vertex swaps.
    That is, as long as two vertices $s \in S$ and $o \in V \setminus S$ exist such that $f(\{o\} \cup S \setminus \{s\}) < f(S)$, then the current solution $S$ is updated to $\{o\} \cup S \setminus \{s\}$.
    Only if no more improving swaps exist, the algorithm terminates and the aforementioned approximation can be guaranteed.

    Angriman~et~al.~\cite{angriman2021group} already restrict the search space by not considering swaps with degree-1 vertices.
    This optimization can be done since using the neighbor of a degree-1 vertex, instead of the vertex itself, never worsens the solution quality.
    The same idea can be extended to the previously introduced set of dominated vertices $D$.
    For each swap that considers a vertex $v \in D$ there is an equally good or better improving swap that utilizes a vertex $u \in V \setminus D$.
    Therefore, the approximation guarantee of Arya et al.~\cite{arya2001local} remains valid when restricting the search space to $V\setminus D$.

    \textbf{\greedy\ Does Not Approximate.}
    As mentioned in the related work, Chen~et~al.~\cite{chen2016efficient} presented the \greedy\ algorithm, which starts from the empty set $S_0 = \emptyset$ and iteratively adds the vertex that improves the score the most.
    That is, $S_i = S_{i - 1} \cup \{v\}$ with \hbox{$v = \arg\max_{v \in V \setminus S_{i-1}} c(S_{i -1 } \cup \{v\})$}.
    They mistakenly assumed that $c(\cdot)$ is submodular and that the $(1 - 1/e)$-approximation for submodular functions from Nemhauser~et~al.~\cite{nemhauser1978analysis} holds.
    However, Bergamini~et~al.~\cite{DBLP:journals/corr/abs-1710-01144} showed that $c(\cdot)$ is not submodular and therefore the approximation claim cannot be supported by this proof.
    Note that the lack of submodularity does not necessarily mean the greedy strategy performs poorly.
    Bergamini~et~al.~\cite{DBLP:journals/corr/abs-1710-01144} observed very high empirical approximation ratios never lower than $0.97$ in experiments on real-world graphs.
    While \greedy\ performs well in practice, whether it does have an approximation guarantee was left unanswered.
    Here, we show that \greedy\ can give solutions with arbitrarily bad approximation ratios, and it therefore has no constant factor approximation guarantee.

    \begin{theorem}
        \label{greedy_theorem}
        \greedy\ does not have a constant factor approximation guarantee.
    \end{theorem}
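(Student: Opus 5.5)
The plan is to give an explicit family of graphs $G_{N,L}$ with $k=2$ on which \greedy\ returns a set whose farness exceeds the optimum by a factor that grows without bound in $L$. Since $c(S)$ and $f(S)$ are reciprocal up to the same numerator $|V|-k$, a growing farness ratio means $c(S_k)/c(S^*)\to 0$. The idea is to exploit the fact that \greedy\ must start at the best single vertex (the $1$-median). We build a graph whose $1$-median is a "compromise" vertex sitting between two heavy clusters. From there, the second greedy pick can fix only one cluster, whereas the optimum places one center in each cluster.

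\textbf{Construction.} Take two stars with centers $a$ and $b$, each having $N$ leaves. Join $a$ and $b$ by a path of length $2L$ whose midpoint is $m$. The parameters satisfy $N \gg L^2$, for example $N = L^3$.

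\textbf{Step 1: bound the optimum.} I will show $f(S^*) \le f(\{a,b\}) = 2N + \mathcal{O}(L^2)$. The leaves contribute $2N$, and the $2L-1$ inner path vertices contribute at most $2\sum_{t\le L} t = \mathcal{O}(L^2)$.

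\textbf{Step 2: the first greedy pick is $m$.} Consider a path vertex at offset $t$ from $m$. Its leaf contribution is $N(L+1+t)+N(L+1-t) = 2N(L+1)$, which is the same for every offset $t$. Its path contribution is $\sum_j |j-t|$, which is uniquely minimized at $t=0$. A leaf $\ell$ adjacent to $a$ has farness at least that of $a$: every vertex other than $\ell$ is one step farther from $\ell$ than from $a$, so $f(\ell) - f(a) = |V|-2 > 0$. Hence $m$ is the unique minimizer of $f(\{v\})$, so \greedy\ picks $S_1=\{m\}$ regardless of how ties are broken.

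\textbf{Step 3: any second pick still leaves a bad cluster.} Fix any $v \neq m$. I will show that at least one of the two stars, say the one at $b$, is still served through $m$. That is, every leaf $\ell$ of that star has $\dist(\ell,\{m,v\}) \ge L+1$, except possibly one leaf if $v$ itself is a leaf of that star. The reason is that any vertex within distance $L$ of a leaf of $b$'s star lies in that star or on $b$'s half of the path. If $v$ lies there, then $v$ is at distance at least $L+1$ from every leaf of $a$'s star (when $v$ is on the path itself, the distance is at least $L+2$). Since $\dist(m,\cdot)=L+1$ on those leaves, $a$'s star is the one still served through $m$. Consequently, $f(S_2) \ge (N-1)(L+1)$.

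\textbf{Conclusion.} Combining Steps 1--3 gives
\[
\frac{f(S_2)}{f(S^*)} \ge \frac{(N-1)(L+1)}{2N + \mathcal{O}(L^2)}.
\]
With $N = L^3$, this ratio tends to $\infty$ as $L\to\infty$. So $c(S_2)/c(S^*)\to 0$, and no constant approximation factor holds. The main point needing care is Step 2. The leaf contributions are identical across all path vertices, so the choice of $m$ rests entirely on the lower-order path term, and I must make sure that term is a strict minimizer so that tie-breaking cannot rescue \greedy. If a cleaner argument is wanted, I would instead attach a single extra leaf to $m$ so that $m$ wins by an explicit margin.
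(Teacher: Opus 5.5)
Your proof is correct and follows essentially the same route as the paper. The paper uses the same family (a path on $2r-1$ vertices with $r^2$ leaves at each end, $k=2$), lets \greedy\ pick the midpoint $c$ and then an end vertex, and compares against the two end vertices for a ratio of order $r$. Your Steps 2 and 3 make rigorous what the paper only asserts, namely that the midpoint is the unique $1$-median and that \emph{every} second pick leaves one star at distance at least $L+1$; your requirement $N \gg L^2$ is more than needed, since $N = \Theta(L^2)$, as in the paper, already yields a ratio of order $L$.
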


    \begin{proof}
        We show the claim by constructing a family of graphs $(G_r)_{r\in \mathbb N}$ for which \greedy\ may produce arbitrarily poor approximations for $k=2$.
        A graph $G_r$ of the family contains a path $P$ with end vertices $e_1$ and $e_2$.
        $P$ has an odd number of vertices (precisely $2r-1$) and therefore has a central vertex $c$.
        Attached to $e_1$ and $e_2$ are $r^2$ leaves (degree 1 vertices), respectively.
        Figure~\ref{fig:greedy-counter} shows an illustration of such a graph.
        Let us denote by $S^G_r$ and $S^*_r$ the solution produced by \greedy\ and the exact solution for $G_r$ with $k=2$, respectively.
        Recall that $f(S) = \sum_{v \in V(G)} \dist(v,S)$ denotes the group farness of $S$.
        Now consider the solution of \greedy.
        The algorithm first chooses the vertex $c$ as it is the most central one.
        Afterwards, the vertices $e_1$ and $e_2$ are equally well suited to be chosen and w.l.o.g. we assume the algorithm chooses $e_1$ and get $S^G_r = \{c, e_1\}$.
        The $r^2$ leaves attached to $e_2$ each contribute a cost of $r$ to $f(S^G_r)$.
        Therefore, we get $f(S^G_r) \geq r^3$.
        On the other hand, consider the solution $S_e \coloneqq \{e_1, e_2\}$.
        Clearly, $f(S^*_r) \leq f(S_e)$.
        We argue that $f(S_e)$ is upper bounded by a quadratic expression in $r$ and thus $f(S^*_r)$ is as well.
        Indeed, the leaves attached to $e_1$ and $e_2$ incur a total cost of $2r^2$ to $f(S_e)$.
        It remains to consider the cost incurred by the vertices in $P$, which is upper bounded by $2\sum_{i=1}^{r-1}i = r^2-r$.
        Therefore, overall $f(S^*_r) \leq f(S_e) \leq 3r^2-r$ and hence we get
        \begin{align*}
            \frac{f(S^G_r) }{ f(S^*_r)} \geq \frac{r^3 }{ 3r^2-r}\rightarrow \infty \text{ \ for \ } r \rightarrow \infty
        \end{align*}
        and similarly $\frac{c(S^*_r)}{c(S^G_r)} \rightarrow \infty \text{ \ for \ } r \rightarrow \infty$, showing the claim.

        \begin{figure}[t!]
            \centering
            \tikzstyle{nodeStyle} = [circle, draw, fill=black, minimum size=2mm, inner sep=0pt]
\tikzstyle{point} = [circle, draw, fill=red, minimum size=1.5mm, inner sep=0pt]
\tikzstyle{dotStyle} = [circle, draw, fill=black, minimum size=0.3mm, inner sep=0pt]
\tikzstyle{anchorStyle} = [circle, draw, fill=black, minimum size=0.0mm, inner sep=0pt]
\begin{tikzpicture}[
brc/.style = {decoration={brace,mirror,raise=0.3cm},
              decorate}]

\node[nodeStyle] (v0) at (0, -0.8) {};
\node[nodeStyle] (v1) at (0.8, -1.2) {};
\node[nodeStyle] (v2) at (1.6, -1.6) {};
\node[nodeStyle] (v3) at (2.4, -2.0) {};
\node[nodeStyle] (v4) at (3.2, -2.4) {};
\draw (v0) -- (v1);
\draw (v1) -- (v2);
\draw (v2) -- (v3);
\draw (v3) -- (v4);
\node[nodeStyle] (v5) at (4.0, -2.8) {}; %
\node[nodeStyle] (v6) at (4.8, -2.4) {};
\node[nodeStyle] (v7) at (5.6, -2.0) {};
\node[nodeStyle] (v8) at (6.4, -1.6) {};
\node[nodeStyle] (v9) at (7.2, -1.2) {};
\node[nodeStyle] (v10) at (8.0, -0.8) {};
\draw (v4) -- (v5);
\draw (v5) -- (v6);
\draw (v6) -- (v7);
\draw (v7) -- (v8);
\draw (v8) -- (v9);
\draw (v9) -- (v10);
\node[anchorStyle] (a0) at (-0.0, -0.7) {};
\node[anchorStyle] (a1) at (4.0, -2.7) {};

\node[anchorStyle] (a2) at (-1, -0.1) {};
\node[anchorStyle] (a3) at (1, -0.1) {};
\node[nodeStyle] (v11) at (-0.3, 0) {};
\node[nodeStyle] (v12) at (-0.6, 0) {};
\node[nodeStyle] (v13) at (-0.9, 0) {};
\node[nodeStyle] (v14) at (0.3, 0) {};
\node[nodeStyle] (v15) at (0.6, 0) {};
\node[nodeStyle] (v16) at (0.9, 0) {};
\draw (v0) -- (v11);
\draw (v0) -- (v12);
\draw (v0) -- (v13);
\draw (v0) -- (v14);
\draw (v0) -- (v15);
\draw (v0) -- (v16);
\node[dotStyle] (d0) at (-0.1, 0) {};
\node[dotStyle] (d1) at (0, 0) {};
\node[dotStyle] (d1) at (0.1, 0) {};

\node[nodeStyle] (v17) at (7.7, 0) {};
\node[nodeStyle] (v18) at (7.4, 0) {};
\node[nodeStyle] (v19) at (7.1, 0) {};
\node[nodeStyle] (v20) at (8.3, 0) {};
\node[nodeStyle] (v21) at (8.6, 0) {};
\node[nodeStyle] (v22) at (8.9, 0) {};
\draw (v10) -- (v17);
\draw (v10) -- (v18);
\draw (v10) -- (v19);
\draw (v10) -- (v20);
\draw (v10) -- (v21);
\draw (v10) -- (v22);
\node[dotStyle] (d0) at (7.9, 0) {};
\node[dotStyle] (d1) at (8, 0) {};
\node[dotStyle] (d1) at (8.1, 0) {};

\draw[brc](a0) -- node[below=2cm, left=0.5cm] {} (a1);
\node [label={[label distance=0cm]30:}] at (1.1, -2.3) {$r$ vertices};

\draw[brc](a3) -- node[above=0.4cm] {$r^2$ vertices} (a2);

\node [label={[label distance=0cm]30:}] at (4, -2.5) {$c$};

\node [label={[label distance=0cm]30:}] at (-0.35, -0.82) {$e_1$};
\node [label={[label distance=0cm]30:}] at (8.4, -0.82) {$e_2$};

\end{tikzpicture}
            \caption{Sketch of a graph $G_r$ of the family $(G_r)_{r \in \mathbb{N}}$ used to prove that \greedy\ does not have an approximation guarantee.}
            \label{fig:greedy-counter}
        \end{figure}

    \end{proof}
    The proof is of course independent of the numerator $x$ used to define $c(S) = x / f(S)$.
    Note that while we assumed $k=2$, similar families of graphs can be constructed for any integer constant $k > 2$: Let a connected component of $G_r - c$ be called a \emph{flower}.
    By starting with $c$ and attaching $k$ flowers to $c$, we obtain graphs to which the same argument applies.
    Since $c$ is always chosen as the first vertex, there is at least one flower from which no vertex is chosen, which incurs a cost in $\Theta(r^3)$.
    An optimal strategy chooses the $e_i, i = 1,\dots,k$ vertices and therefore only incurs a cost in $\Theta(r^2)$.

    Gong~et~al.~\cite{DBLP:journals/tcs/GongNSFDS21} analyze the simple greedy algorithm for non-negative monotone set functions with \emph{generic submodularity ratio} $\gamma$, a scalar measuring how close a function is to being submodular with $\gamma \in [0,1]$, where $\gamma = 1$ if and only if the function is submodular.
    While Bergamini~et~al.~\cite{DBLP:journals/corr/abs-1710-01144} already showed that $c(\cdot)$ is not submodular, i.e. $\gamma < 1$, our result actually implies the following stronger result.

    \begin{corollary}
        The group closeness centrality function $c(\cdot)$ can have a submodularity ratio $\gamma$ arbitrarily close to zero.
    \end{corollary}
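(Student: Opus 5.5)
We plan to use the greedy guarantee of Gong et al.~\cite{DBLP:journals/tcs/GongNSFDS21} in the contrapositive. For a non-negative monotone set function with generic submodularity ratio $\gamma$, \greedy\ achieves an approximation ratio of at least $\frac{1}{a}\bigl(1-e^{-a\gamma}\bigr)$, where $a$ is the curvature-type parameter appearing in their bound. This is at least some positive function $g(\gamma)$ with $g(\gamma)>0$ whenever $\gamma>0$; for example $g(\gamma)\geq 1-e^{-\gamma}$ since $a\leq 1$. Note also that $g(\gamma)$ is bounded away from zero when $\gamma$ is bounded away from zero.

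To apply it, we first check the hypotheses. The function $c(\cdot)$ is non-negative. Up to the numerator convention it is monotone: $f$ is non-increasing under adding vertices, and with numerator $1$ or $n$ we have $c(S)=x/f(S)$, so $c$ is non-decreasing. The convention $|V|-|S|$ would break monotonicity. The earlier remark that Theorem~\ref{greedy_theorem} is independent of the numerator lets us fix the numerator to $1$ (or $n$) and define $c(\emptyset)=0$.

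Suppose, for contradiction, that there were a constant $\gamma_0>0$ with $\gamma(c_{G})\geq\gamma_0$ for every connected graph $G$. Then by Gong et al., \greedy\ would always return $S^G$ with $c(S^G)\geq g(\gamma_0)\,c(S^*)$. That is a constant-factor approximation, contradicting Theorem~\ref{greedy_theorem}: on the family $(G_r)$ with $k=2$ we have $c(S^*_r)/c(S^G_r)\to\infty$.

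Hence $\inf_r \gamma(c_{G_r})=0$, so $\gamma$ can be arbitrarily close to zero. One could make this quantitative by inverting $g$. From
\[
1-e^{-\gamma_r}\leq c(S^G_r)/c(S^*_r)=O(1/r)
\]
we get $\gamma_r=O(1/r)$.

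The main obstacle is the setup, not the argument. It is matching the exact definition of the generic submodularity ratio, and the form of the greedy bound, from~\cite{DBLP:journals/tcs/GongNSFDS21}, and confirming that the greedy variant analysed there coincides with \greedy. It does: both add the element of maximum marginal gain, and maximizing $c(S\cup\{v\})$ is the same as maximizing the marginal gain of $c$.

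One further detail is the tie-breaking used in Theorem~\ref{greedy_theorem}, where \greedy\ picks $e_1$ versus $e_2$. The bound of Gong et al. holds for any tie-breaking, so the contradiction still goes through. If needed, one can also perturb the construction slightly, for instance by giving $e_1$ one extra leaf, so that the bad choice is forced.
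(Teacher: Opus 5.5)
Your proposal is correct and follows the paper's own route: apply the greedy guarantee $c(S^G)\geq(1-e^{-\gamma})\,c(S^*)$ of Gong et al.\ in the contrapositive to the family $(G_r)$ from Theorem~\ref{greedy_theorem}, where $c(S^G_r)/c(S^*_r)\to 0$ forces $\gamma\to 0$. Your extra care goes slightly beyond the paper and is sound: you fix the numerator to $1$ or $n$, which matters because the $|V|-|S|$ version is not monotone, and you extract the rate $\gamma_r=O(1/r)$; the curvature parameter you attribute to Gong et al.\ is not needed, and the tie-breaking worry is moot because $\{c,e_1\}$ and $\{c,e_2\}$ are symmetric and equally bad.
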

    \begin{proof}
        Let $g$ be a non-negative monotone set function with generic submodularity ratio $\gamma$.
        Let $S_G$ be the set computed by the simple greedy algorithm with $k \in \mathbb N$.
        Gong~et~al.~\cite{DBLP:journals/tcs/GongNSFDS21} show that the simple greedy algorithm obtains the approximation guarantee $g(S_G) \geq (1-e^{-\gamma}) g(S^*)$, where $S^*$ denotes the optimal solution.
        By Theorem~\ref{greedy_theorem}, in the expression $c(S_G) = \alpha \cdot c(S^*)$, $\alpha \in \mathbb R$ may get arbitrarily close to zero, implying the same for $\gamma$.
    \end{proof}

    \section{Experimental Evaluation}\label{sec:experiments}
    We experimentally evaluate our developed techniques to assess their impact on real-world graphs.
    First, we evaluate each technique on its own (Section~\ref{subsec:abs-vert-and-imp-dv}) before comparing \textsc{Grover} to current state-of-the-art solvers (Section~\ref{subsec:comparison-to-sota}).
    In Section~\ref{subsec:experiments-approximation}, we evaluate the impact of using dominating vertices in \textsc{GS-LS-C}~\cite{angriman2021group} to restrict the search space.

    \textbf{Data and Experimental Setup.}
    We use 48 graphs from Konect~\cite{konect} and the Network Data Repository~\cite{rossi2015network}, covering diverse real-world networks.
    A subset of them was also used in the experimental analysis of \textsc{ILPind}~\cite{staus2023exact}.
    They range from small graphs with about 50 vertices to graphs with more than $50\,000$ vertices.
    Table~\ref{tab:graph-stats-compact} in the Appendix holds a complete list with detailed statistics.
    The experiments are run on an Intel Xeon Silver 4216 (2.10GHz, 92 GiB RAM, Ubuntu 24.04.3).
    We test $k \in \{2,\ldots,20\}$ with a 10-minute limit per instance, totaling 912 instances.
    For $k=1$, \greedy\ already produces the exact solution. Therefore, unlike \textsc{ILPind}, \textsc{Grover} computes the solution using \greedy\ for $k=1$. Hence, for a fair comparison, we exclude $k=1$ and report runtimes (excluding I/O) averaged over five runs.

    \textbf{Solvers.}
    \textsc{ILPind}~\cite{staus2023exact} is the state-of-the-art algorithm for \gccm.
    We modified it to use Gurobi 12.0.1~\cite{gurobi}.
    \textsc{DVind}~\cite{staus2023exact} is the fastest explicit branch-and-bound algorithm to solve \gccm\ and it also deploys dominating vertices.
    Both \textsc{ILPind} and \textsc{DVind} are implemented in Kotlin and run using Java openjdk 21.0.8.
    \textsc{SubModST}~\cite{Woydt24} is a branch-and-bound algorithm for general submodular function maximization.
    It employs more pruning techniques and optimizations, but does not utilize dominating vertices.
    It is implemented in C++17.
    Our algorithm \textsc{Grover} also utilizes Gurobi 12.0.1 and is implemented in C++20.
    In the approximate setting we compare against \textsc{GS-LS-C}~\cite{angriman2023algorithms}, which is implemented in the C++ library NetworKit~\cite{angriman2023algorithms}.
    All C++ solvers are compiled using gcc 15.1.0 and \hbox{\textsc{-O3}} optimization.
    Note that \textsc{PRESTO}~\cite{rajbhandari2022presto} is not publicly available and their experiments show that it does not scale for $k > 7$.

    \subsection{Absorbed Vertices and Improved $d(v)$}\label{subsec:abs-vert-and-imp-dv}
    In this section, we evaluate the effect of absorbing vertices and initially approximating the $d(v)$-values.
    Recall that absorbing vertices is a data reduction technique that allows us to remove variables and corresponding constraints from the ILPs.
    Approximating the $d(v)$-values should overall result in fewer iterations of the algorithm and therefore improve the running time.
    Additionally, we compare the runtime of \textsc{Grover} when it is initialized by \textsc{GS-LS-C}~\cite{angriman2023algorithms} and when it is initialized with an optimal solution.
    This provides insight into how much more speedup one can expect by utilizing better initial solutions.

    \begin{figure}
        \centering
        \begin{minipage}[t]{0.48\textwidth}
            \centering
            \includegraphics[width=\linewidth]{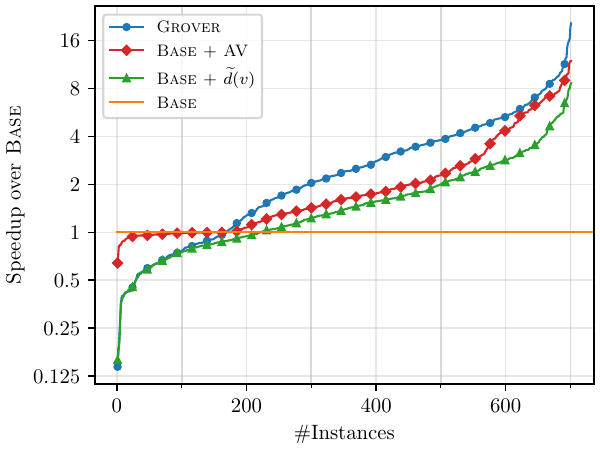}

            \small (a)
        \end{minipage}\hfill
        \begin{minipage}[t]{0.48\textwidth}
            \centering
            \includegraphics[width=\linewidth]{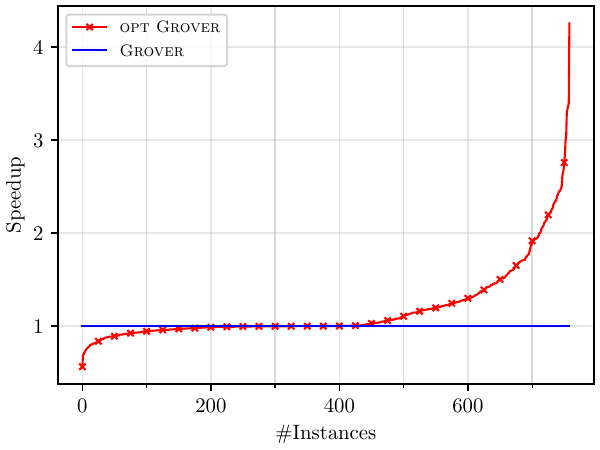}

            \small (b)
        \end{minipage}
        \caption{
            (a) Speedup of our different optimizations over \textsc{Base}. \textsc{Base} uses dominating vertices and initially sets $d(v)=2$. Absorbing vertices \textsc{AV} and $\widetilde{d}(v)$ are evaluated separately. \textsc{Grover} uses both.
            (b) Speedup of \textsc{Grover} initialized with an optimal solution compared to the default initialization with \textsc{GS-LS-C}~\cite{angriman2023algorithms}. Speedups are determined per instance and then sorted.
        }
        \label{fig:figure2}
    \end{figure}

    First, we compare the effect of absorbing vertices (\textsc{AV}) and approximating the $d(v)$-values.
    The \textsc{Base} algorithm only uses dominating vertices and the default $d(v) = 2$ initialization.
    We compare to \textsc{Base + AV} which additionally enables absorbing vertices, but keeps $d(v) = 2$.
    The algorithm \textsc{Base + $\widetilde{d}(v)$} approximates the $d(v)$-values, but does not enable absorbing vertices.
    Finally, \textsc{Grover} enables absorbing vertices and approximates the $d(v)$-values.

    Figure~\ref{fig:figure2}a shows the speedup of each configuration over \textsc{Base}.
    For each instance the speedup is determined and these are sorted increasingly.
    Note that these speedups are determined on the 701 instances all configurations are able to solve.
    \hbox{\textsc{Base + AV}}, \textsc{Base + $\widetilde{d}(v)$} and \textsc{Grover} solve 713, 750 and 759 instances respectively.
    Table~\ref{tab:config-comparison} in the Appendix shows more detailed results for each graph.
    \textsc{Base} is overall the slowest configuration as the other configurations have a mean speedup of $1.18\times$, $1.53\times$ and $2.19\times$ respectively.
    Note that these mean speedups are however skewed by many easy-to-solve instances, for which the overhead of computing \textsc{GS-LS-C} has a detrimental effect on the total runtime.
    This can be seen for \textsc{Base + $\widetilde{d}(v)$} and \textsc{Grover} in Figure~\ref{fig:figure2}a.
    Therefore, we also compare the speedup measured as total sum of time needed to solve all 701 instances, and simply refer to this measure as \emph{speedup}, while we refer to the geometric mean speedup as \emph{mean speedup}.
    Then, relative to \textsc{Base}, \hbox{\textsc{Base + AV}} has a speedup of $2.07\times$, while \hbox{\textsc{Base + $\widetilde{d}(v)$}} has a speedup of $1.41\times$.
    \textsc{Grover} has a speedup of $3.20\times$.
    For harder-to-solve instances that require more time, our improvements have a greater effect in reducing the runtime.
    While \textsc{Grover} has the largest speedup, it is not the fastest configuration for each instance.
    In total, 760 instances were solved by at least one configuration.
    Out of these, \textsc{Grover} is the fastest on 411, while \textsc{Base + AV} is the fastest on 185 instances.
    \textsc{Base} is the fastest configuration on 109, while \hbox{\textsc{Base + $\widetilde d(v)$}} is the fastest on 55 instances.
    For larger instances, \textsc{Grover} is the fastest algorithm, however for smaller instances, there is no clear winner.

    Secondly, we compare \textsc{Grover} when initialized with \textsc{GS-LS-C} versus an optimal solution, to assess how much room for improvement remains in the initialization.
    A large difference between the running times indicates that better initial solutions will improve the running time of \textsc{Grover}.
    However, small differences show that the running time cannot be further improved by other initializations.
    Note that initializing \textsc{Grover} with an optimal solution means that the algorithm essentially only verifies that the solution is indeed optimal.
    However, it does not imply that only one iteration of the ILP is necessary.
    While the ILP is sufficient for the optimal input solution, it may not be sufficient for another optimal solution.
    To verify that the other solution is indeed not strictly better the algorithm may need to enlarge the ILP, which could result in larger running times.
    For this experiment, we removed the time it costs \textsc{Grover} to run \textsc{GS-LS-C}, since \textsc{opt Grover} has no additional cost to obtain the optimal solution.
    Therefore, only the time it costs to solve the ILPs is measured.
    Figure~\ref{fig:figure2}b shows the corresponding speedups, while Table~\ref{tab:config-comparison} in the Appendix shows a more detailed view.
    The speedups are computed per instance and sorted afterwards.
    Out of the 759 instances, 401 are solved faster when initialized with an optimal solution while 358 are solved slower.
    The smallest speedup achieved is $0.56\times$ for \textsc{inf-euroroad} and $k=2$, however the mean speedup across all slower instances is $0.95\times$, the loss in time therefore is negligible.
    Overall, the speedup is $1.14\times$ while the mean speedup is $1.13\times$.
    For 46 instances a speedup of greater than 2 and for 7 instances a speedup of greater than 3 is achieved.
    Initialization with an optimal solution does not greatly improve the overall running time of the ILP, indicating that the approximate initialization with \textsc{GS-LS-C}~\cite{angriman2023algorithms} is already of high quality.

    \subsection{Comparison to SOTA}\label{subsec:comparison-to-sota}
    We compare our algorithm \textsc{Grover} against \textsc{ILPind}~\cite{staus2023exact}, \textsc{DVind}~\cite{staus2023exact} and \textsc{SubModST}~\cite{Woydt24}.
    Figure~\ref{fig:figure}a shows the number of instances each algorithm is able to solve.
    Table~\ref{tab:sota-comparison} in the Appendix holds more detailed running times.
    While \textsc{SubModST} performs very fast for the first few instances, it is overall the slowest algorithm and only solves 329 instances.
    \textsc{DVind} performs better as it solves 546 instances, but still falls behind both of the iterative ILP algorithms.
    \textsc{ILPind} solves 692 instances while \textsc{Grover} solves 759 out of the 912 possible instances.
    \textsc{Grover} has a mean speedup of $4.59\times$ compared to \textsc{ILPind} and a maximum speedup of $34.16\times$.
    Only on 22 out of the 692 instances solved by both solvers is \textsc{Grover} slower.
    In this case, \textsc{Grover}'s worst speedup is $0.40\times$ while the mean speedup on those 22 instances is $0.74\times$.
    However, each instance solved by \textsc{ILPind} is also solved by \textsc{Grover}.
    Overall, \textsc{Grover} spends more than $95\%$ of its runtime in solving the ILPs.

    \begin{figure}[t!]
        \centering
        \begin{minipage}[t]{0.48\textwidth}
            \centering
            \includegraphics[width=\linewidth]{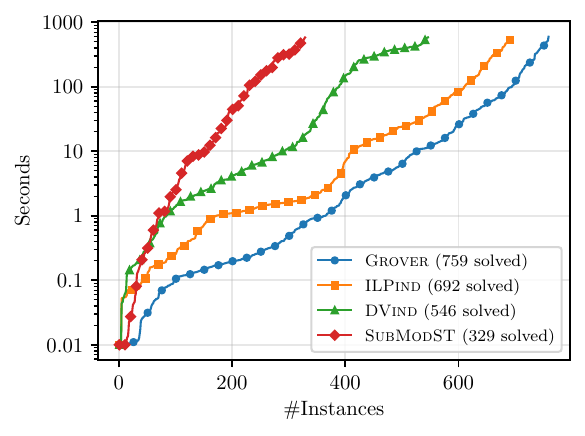}

            \small (a)
        \end{minipage}\hfill
        \begin{minipage}[t]{0.48\textwidth}
            \centering
            \includegraphics[width=\linewidth]{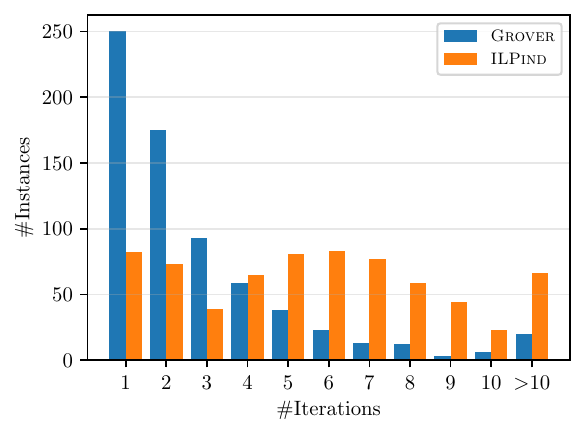}

            \small (b)
        \end{minipage}
        \caption{
            (a) Number of solved instances for \textsc{Grover}, \textsc{ILPind}~\cite{staus2023exact}, \textsc{DVind}~\cite{staus2023exact}, and \textsc{SubModST}~\cite{Woydt24}.
            (b) Distribution of the number of iterations that \textsc{Grover} (ours) and \textsc{ILPind}~\cite{staus2023exact} need to solve all instances.
        }
        \label{fig:figure}
    \end{figure}

    Figure~\ref{fig:figure}b shows how many instances \textsc{Grover} and \textsc{ILPind} solve and how many ILP iterations are necessary.
    This plot only includes the 692 instances that both \textsc{Grover} and \textsc{ILPind} are able to solve.
    \textsc{Grover} is able to solve significantly more instances in fewer iterations than \textsc{ILPind}.
    About $36\%$ of instances are solved within one iteration, while \textsc{ILPind} only solves about $11\%$ within one iteration.
    For more than $50\%$ of instances, \textsc{ILPind} needs 6 or more iterations while \textsc{Grover} needs 6 or more iterations for only about $11\%$ of all instances.
    Note that there is one instance, \textsc{econ-mahindas} and $k=20$, where \textsc{ILPind} only requires three iterations, while \textsc{Grover} needs four.
    In total, \textsc{Grover} greatly reduces the number of needed iterations, which was the goal of approximating the $d(v)$-values.

    Figure~\ref{fig:approx_figure}a shows the geometric mean, minimum, and maximum speedup of \textsc{Grover} over \textsc{ILPind} across $k$.
    For each $k$ all instances solved by both algorithms are included in this comparison.
    The mean speedup peaks at $7.91\times$ for $k=2$ and decreases to $3.80\times$ for $k=20$, with a similar trend for min and max values.
    The maximum speedup reaches $34.16\times$ at $k=3$ and drops to $9.25\times$ at $k=20$.
    For $k \leq 11$ (except $k=5$), the minimum speedup exceeds 1, meaning \textsc{Grover} is never slower, while for $k \geq 12$ it can be slower, with a minimum of $0.40$ at $k=19$.
    Note however that \textsc{Grover} is only slower on 22 out of 692 instances.

    \textbf{Addressing Implementation Differences.}
    Since the implementation details of \textsc{Grover} and \textsc{ILPind} differ, for example with respect to the programming language used, a proper comparison has to take these differences into account.
    The algorithm \textsc{Base} is in essence a re-implementation of \textsc{ILPind}.
    \textsc{Grover} is built on top of \textsc{Base} by extending it with our improvements.
    When comparing \textsc{Base} to \textsc{ILPind} using the sum of runtimes over all instances (solved by both algorithms), \textsc{Base} achieves a modest $1.15 \times$ speedup.
    We use this measure rather than the geometric mean of speedups to minimize the skew of JVM warmup (among other effect due to implementation differences) on instances solved quickly.
    This comparison ensures that the observed speedups can be strictly attributed to our algorithmic improvements, rather than implementation details.

    \subsection{Dominating Vertices in \textsc{GS-LS-C}}\label{subsec:experiments-approximation}
    Since using dominating vertices provably restricts the search space in the exact case, we also want to empirically explore the search space restriction in an approximate setting, which as discussed in Section \ref{sec:approximation-results} does not compromise its approximation guarantee.
    We therefore compare the running time and solution quality of \textsc{GS-LS-C}~\cite{angriman2021group} with and without the restriction.
    We test the values of $k \in \{10, 50, 100\}$ on each graph with more than 100 vertices.
    These group sizes were also used in the experimental evaluation of \textsc{GS-LS-C}.
    Each graph-$k$ instance is run 5 times and the required time and achieved group farness are averaged.
    Figure~\ref{fig:approx_figure}b shows the speedups.
    For $k=10$, the mean speedup is $1.54\times$ (min $0.69\times$, max $5.83\times$), with slowdowns on small instances due to the overhead of identifying dominated vertices.
    This overhead diminishes for larger instances: for $k=50$ and $k=100$, mean speedups increase to $1.59\times$ and $1.71\times$, with minima $0.85\times$, $0.90\times$ and maxima $7.51\times$, $9.64\times$.
    The solution quality largely remains unaffected.
    In fact, the average increase in group farness is smaller than $0.01\%$ for all values of $k$.
    Overall, restricting the search space in \textsc{GS-LS-C}~\cite{angriman2021group} improves the running time without altering solution quality.
    Including the restriction when the approximation algorithm is initialized with different initial solutions will likely lead to the same results.

    \begin{figure}
        \centering
        \begin{minipage}{0.48\textwidth}
            \centering
            \includegraphics[width=\linewidth]{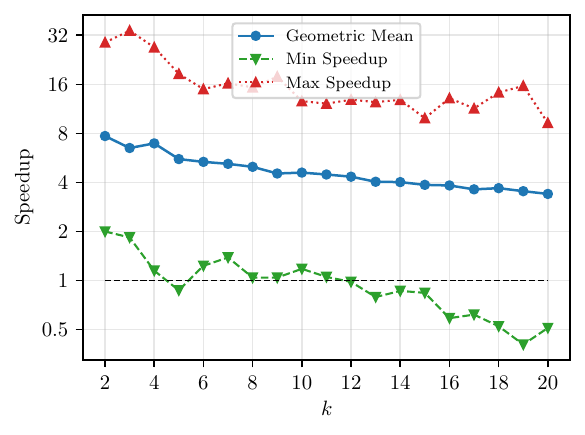}

            \small (a)
        \end{minipage}\hfill
        \begin{minipage}{0.48\textwidth}
            \centering
            \includegraphics[width=\linewidth]{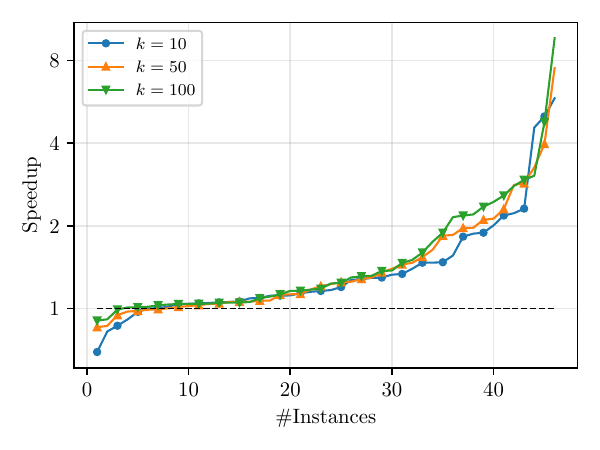}

            \small (b)
        \end{minipage}

        \caption{
            (a) Speedup of \textsc{Grover} over \textsc{ILPind}~\cite{staus2023exact} for different values of $k$.
            (b) Runtime ratio of \textsc{GS-LS-C}~\cite{angriman2021group} without dominating vertices and with dominating vertices.}
        \label{fig:approx_figure}
    \end{figure}

    \section{Conclusion and Future Work}\label{sec:conclusion}
    Group closeness centrality maximization asks for a set of $k$ vertices, such that the distance to all other vertices in the graph is minimized.
    In this work, we introduced two new techniques for the exact case.
    The first one, absorbed vertices, proves that certain vertices have their distance to any optimal solution determined by a neighbor.
    This leads to a smaller ILP as the variables associated with the absorbed vertices can be removed.
    The second technique bootstraps the ILP using an approximate solution to produce near-sufficient ILPs.
    This leads the algorithm to solve many instances in only one iteration and overall greatly reduces the number of required iterations.
    Our algorithm \textsc{Grover} has a mean speedup of $4.59\times$ and can achieve speedups of up to $34.16\times$ over the state-of-the-art algorithm \textsc{ILPind}~\cite{staus2023exact}.
    We proved that restricting the search space of \textsc{GS-LS-C}~\cite{angriman2023algorithms} to non-dominated vertices preserves its $1/5$-approximation guarantee, achieving speedups of up to $9.64\times$.
    Additionally, we settled an open question by proving that a widely-used greedy algorithm~\cite{chen2016efficient} does not give a constant factor approximation.
    Future work could explore if the techniques proposed here can be applied to more general problems, such as the weighted version of the problem studied here.

    \bibliography{bibliography}

@inproceedings{staus2023exact,
  author = {Luca Pascal Staus and Christian Komusiewicz and Nils Morawietz and Frank Sommer},
  editor = {Jonathan W. Berry and David B. Shmoys and Lenore Cowen and Uwe Naumann},
  title = {{E}xact {A}lgorithms for {G}roup {C}loseness {C}entrality},
  booktitle = {{SIAM} Conference on Applied and Computational Discrete Algorithms, {ACDA} 2023, Seattle, WA, USA, May 31 - June 2, 2023},
  pages = {1--12},
  publisher = {{SIAM}},
  year = {2023},
  doi = {10.1137/1.9781611977714.1}
}

@article{nemhauser1978analysis,
  author = {George L. Nemhauser and Laurence A. Wolsey and Marshall L. Fisher},
  title = {An analysis of approximations for maximizing submodular set functions - {I}},
  journal = {Math. Program.},
  volume = {14},
  number = {1},
  pages = {265--294},
  year = {1978},
  doi = {10.1007/BF01588971},
}

@InProceedings{chen2016efficient,
  author    = {Chen Chen and Wei Wang and Xiaoyang Wang},
  booktitle = {Databases Theory and Applications - 27th Australasian Database Conference, {ADC} 2016, Sydney, NSW, Australia, September 28-29, 2016, Proceedings},
  title     = {{E}fficient {M}aximum {C}loseness {C}entrality {G}roup {I}dentification},
  doi       = {10.1007/978-3-319-46922-5_4},
  editor    = {Muhammad Aamir Cheema and Wenjie Zhang and Lijun Chang},
  pages     = {43--55},
  publisher = {Springer},
  series    = {Lecture Notes in Computer Science},
  volume    = {9877},
  year      = {2016},
}

@article{DBLP:journals/corr/abs-1710-01144,
  author = {Elisabetta Bergamini and Tanya Gonser and Henning Meyerhenke},
  title = {{S}caling up {G}roup {C}loseness {M}aximization},
  journal = {CoRR},
  volume = {abs/1710.01144},
  year = {2017},
  doi = {10.48550/arXiv.1710.01144}
}

@InCollection{angriman2023algorithms,
  author    = {Eugenio Angriman and Alexander van der Grinten and Michael Hamann and Henning Meyerhenke and Manuel Penschuck},
  booktitle = {Algorithms for Big Data - {DFG} Priority Program 1736},
  title     = {{A}lgorithms for {L}arge-scale {N}etwork {A}nalysis and the {N}etwor{K}it {T}oolkit},
  doi       = {10.1007/978-3-031-21534-6_1},
  editor    = {Hannah Bast and Claudius Korzen and Ulrich Meyer and Manuel Penschuck},
  pages     = {3--20},
  publisher = {Springer},
  series    = {Lecture Notes in Computer Science},
  volume    = {13201},
  year      = {2022},
}

@article{everett1999centrality,
  author = {M. G. Everett and S. P. Borgatti},
  title = {The centrality of groups and classes},
  journal = {The Journal of Mathematical Sociology},
  volume = {23},
  number = {3},
  pages = {181--201},
  year = {1999},
  publisher = {Routledge},
  doi = {10.1080/0022250X.1999.9990219}
}

@InProceedings{rossi2015network,
  author    = {Rossi, Ryan A. and Ahmed, Nesreen K.},
  booktitle = {Proceedings of the Twenty-Ninth {AAAI} Conference on Artificial Intelligence, January 25-30, 2015, Austin, Texas, {USA}},
  title     = {{T}he {N}etwork {D}ata {R}epository with {I}nteractive {G}raph {A}nalytics and {V}isualization},
  doi       = {10.1609/AAAI.V29I1.9277},
  editor    = {Blai Bonet and Sven Koenig},
  pages     = {4292--4293},
  publisher = {{AAAI} Press},
  year      = {2015},
}

@InProceedings{DBLP:journals/corr/abs-1911-03360,
  author    = {Eugenio Angriman and Alexander van der Grinten and Henning Meyerhenke},
  booktitle = {2019 {IEEE} International Conference on Big Data {(IEEE} BigData), Los Angeles, CA, USA, December 9-12, 2019},
  title     = {{L}ocal {S}earch for {G}roup {C}loseness {M}aximization on {B}ig {G}raphs},
  doi       = {10.1109/BIGDATA47090.2019.9006206},
  editor    = {Chaitanya K. Baru and Jun Huan and Latifur Khan and Xiaohua Hu and Ronay Ak and Yuanyuan Tian and Roger S. Barga and Carlo Zaniolo and Kisung Lee and Yanfang (Fanny) Ye},
  pages     = {711--720},
  publisher = {{IEEE}},
  bibsource = {dblp computer science bibliography, https://dblp.org},
  year      = {2019},
}

@InProceedings{angriman2021group,
  author    = {Eugenio Angriman and Ruben Becker and Gianlorenzo D'Angelo and Hugo Gilbert and Alexander van der Grinten and Henning Meyerhenke},
  booktitle = {Proceedings of the 23rd Symposium on Algorithm Engineering and Experiments, {ALENEX} 2021, Virtual Conference, January 10-11, 2021},
  title     = {{G}roup-{H}armonic and {G}roup-{C}loseness {M}aximization - {A}pproximation and {E}ngineering},
  doi       = {10.1137/1.9781611976472.12},
  editor    = {Martin Farach{-}Colton and Sabine Storandt},
  pages     = {154--168},
  publisher = {{SIAM}},
  bibsource = {dblp computer science bibliography, https://dblp.org},
  year      = {2021},
}

@Article{10f5227e-a114-31a3-b1b2-cc4f421e84a5,
  author    = {Alex Bavelas},
  title     = {A MATHEMATICAL MODEL FOR GROUP STRUCTURES},
  issn      = {0093-2914},
  number    = {3},
  pages     = {16--30},
  url       = {http://www.jstor.org/stable/44135428},
  urldate   = {2025-08-23},
  volume    = {7},
  journal   = {Applied Anthropology},
  publisher = {Society for Applied Anthropology},
  year      = {1948},
}

@Article{https://doi.org/10.1002/bs.3830100205,
  author  = {Beauchamp, Murray A.},
  title   = {An improved index of centrality},
  doi     = {10.1002/bs.3830100205},
  number  = {2},
  pages   = {161--163},
  volume  = {10},
  journal = {Behavioral Science},
  year    = {1965},
}

@article{sabidussi1966centrality,
  author  = {Gert Sabidussi},
  title   = {The Centrality Index of a Graph},
  journal = {Psychometrika},
  volume  = {31},
  number  = {4},
  pages   = {581--603},
  year    = {1966},
  doi     = {10.1007/BF02289527}
}

@Book{social2013,
  author    = {Borgatti, Stephen P. and Everett, Martin G. and Johnson, Jeffrey C.},
  title     = {Analyzing social networks},
  isbn      = {9781446247419},
  publisher = {{SAGE}},
  year      = {2013},
}

@Article{van2013network,
  author  = {Martijn P. van den Heuvel and Olaf Sporns},
  title   = {Network Hubs in the Human Brain},
  doi     = {10.1016/j.tics.2013.09.012},
  number  = {12},
  pages   = {683--696},
  volume  = {17},
  journal = {Trends in Cognitive Sciences},
  year    = {2013},
}

@Article{chea2007accurate,
  author  = {Chea, Eric and Livesay, Dennis R.},
  title   = {How accurate and statistically robust are catalytic site predictions based on closeness centrality?},
  doi     = {10.1186/1471-2105-8-153},
  volume  = {8},
  journal = {{BMC} Bioinform.},
  year    = {2007},
}

@Article{AdebayoSun+2020,
  author  = {Isaiah G. Adebayo and Yanxia Sun},
  title   = {A novel approach of closeness centrality measure for voltage stability analysis in an electric power grid},
  doi     = {10.1515/ijeeps-2020-0013},
  number  = {3},
  pages   = {20200013},
  volume  = {21},
  journal = {International Journal of Emerging Electric Power Systems},
  year    = {2020},
}

@inbook{Koschuetzki2005,
  author = {Kosch{\"u}tzki, Dirk and Lehmann, Katharina Anna and Peeters, Leon and Richter, Stefan and Tenfelde-Podehl, Dagmar and Zlotowski, Oliver},
  editor = {Brandes, Ulrik and Erlebach, Thomas},
  title = {{C}entrality {I}ndices},
  booktitle = {{N}etwork {A}nalysis: {M}ethodological {F}oundations},
  year = {2005},
  publisher = {Springer Berlin Heidelberg},
  address = {Berlin, Heidelberg},
  pages = {16--61},
  isbn = {978-3-540-31955-9},
  doi = {10.1007/978-3-540-31955-9_3}
}

@Article{arya2001local,
  author  = {Arya, Vijay and Garg, Naveen and Khandekar, Rohit and Meyerson, Adam and Munagala, Kamesh and Pandit, Vinayaka},
  title   = {{L}ocal {S}earch {H}euristics for k-{M}edian and {F}acility {L}ocation {P}roblems},
  doi     = {10.1137/S0097539702416402},
  number  = {3},
  pages   = {544--562},
  volume  = {33},
  journal = {{SIAM} J. Comput.},
  year    = {2004},
}

@article{rajbhandari2022presto,
  author = {Baibhav Rajbhandari and Paul Olsen Jr. and Jeremy Birnbaum and Jeong{-}Hyon Hwang},
  title = {{PRESTO:} {F}ast and {E}ffective {G}roup {C}loseness {M}aximization},
  journal = {{IEEE} Trans. Knowl. Data Eng.},
  volume = {35},
  number = {6},
  pages = {6209--6223},
  year = {2023},
  doi = {10.1109/TKDE.2022.3178925}
}

@InProceedings{Woydt24,
  author             = {Henning Martin Woydt and Christian Komusiewicz and Frank Sommer},
  booktitle          = {32nd Annual European Symposium on Algorithms, {ESA} 2024, Royal Holloway, London, United Kingdom, September 2-4, 2024},
  title              = {{S}ub{M}od{ST}: {A} {F}ast {G}eneric {S}olver for {S}ubmodular {M}aximization with {S}ize {C}onstraints},
  doi                = {10.4230/LIPIcs.ESA.2024.102},
  editor             = {Timothy M. Chan and Johannes Fischer and John Iacono and Grzegorz Herman},
  pages              = {102:1--102:18},
  publisher          = {Schloss Dagstuhl - Leibniz-Zentrum f{\"{u}}r Informatik},
  series             = {LIPIcs},
  volume             = {308},
  _bib2doi_confirmed = {true},
  _bib2doi_selected  = {dblp:/rec/conf/esa/WoydtKS24.bib},
  bibsource          = {dblp computer science bibliography, https://dblp.org},
  biburl             = {https://dblp.org/rec/conf/esa/WoydtKS24.bib},
  timestamp          = {Tue, 24 Sep 2024 01:00:00 +0200},
  year               = {2024},
}

@inproceedings{Rymon92,
  author = {Ron Rymon},
  editor = {Bernhard Nebel and Charles Rich and William R. Swartout},
  title = {{S}earch through {S}ystematic {S}et {E}numeration},
  booktitle = {Proceedings of the 3rd International Conference on Principles of Knowledge Representation and Reasoning (KR'92). Cambridge, MA, USA, October 25-29, 1992},
  pages = {539--550},
  publisher = {Morgan Kaufmann},
  year = {1992}
}

@Manual{gurobi,
  author = {{Gurobi Optimization, LLC}},
  title  = {{G}urobi {O}ptimizer {R}eference {M}anual},
  note   = {Available at},
  url    = {https://www.gurobi.com},
  year   = {2025},
}

@inproceedings{konect,
  title = {{KONECT} -- {The} {Koblenz} {Network} {Collection}},
  author = {J\'{e}r\^{o}me Kunegis},
  year = {2013},
  booktitle = {Proc. Int. Conf. on World Wide Web Companion},
  pages = {1343--1350},
}

@Thesis{ternes_2025,
  author = {Ternes, Jakob},
  title  = {{A}lgorithms for {G}roup {C}loseness {C}entrality {M}aximization},
  doi    = {10.5281/zenodo.18293375},
  month  = dec,
  year   = {2025},
}

@Article{DBLP:journals/tcs/GongNSFDS21,
  author    = {Suning Gong and Qingqin Nong and Tao Sun and Qizhi Fang and Ding{-}Zhu Du and Xiaoyu Shao},
  title     = {Maximize a monotone function with a generic submodularity ratio},
  doi       = {10.1016/J.TCS.2020.05.018},
  pages     = {16--24},
  volume    = {853},
  bibsource = {dblp computer science bibliography, https://dblp.org},
  journal   = {Theor. Comput. Sci.},
  year      = {2021},
}

@Article{Tarjan72,
  author    = {Robert Endre Tarjan},
  title     = {Depth-First Search and Linear Graph Algorithms},
  doi       = {10.1137/0201010},
  number    = {2},
  pages     = {146--160},
  volume    = {1},
  bibsource = {dblp computer science bibliography, https://dblp.org},
  journal   = {{SIAM} J. Comput.},
  year      = {1972},
}

    \newpage

    \clearpage

    \section*{Appendix}

    \section{Graphs and Results}\label{app:tables}
        {\centering
    \captionof{table}{Graphs used in the experimental evaluation.
    \#dom and \#abs denote the numbers of dominated and absorbed vertices, respectively;
    dens is the graph density and diam its diameter.
    }
    \resizebox{\linewidth}{!}{%
        \begin{tabular}{lrrrrrr|lrrrrrr}
            \hline
            Graph                          & $n$    & $m$      & dens   & diam & \#dom  & \#abs  & Graph               & $n$     & $m$         & dens   & diam & \#dom   & \#abs   \\
            \hline
            contiguous-usa                 & 49     & 107      & 0.091  & 11   & 13     & 1      & fb-pages-tvshow     & 3\,892  & 17\,239     & 0.002  & 20   & 1\,690 & 688 \\
            brain\_1 & 65     & 730      & 0.351  & 3    & 17     & 0      & ca-GrQc             & 4\,158  & 13\,422     & 0.002  & 17 & 2\,712 & 1\,171 \\
            arenas-jazz                    & 198    & 2\,742   & 0.141  & 6    & 93     & 5      & inf-power           & 4\,941  & 6\,594      & <0.001 & 46   & 1\,487  & 1\,278  \\
            ca-netscience                  & 379    & 914      & 0.013  & 17   & 306    & 128    & ca-Erdos992         & 4\,991  & 7\,428      & <0.001 & 14   & 3\,861 & 3\,516 \\
            robot24c1\_mat5                & 404    & 14\,261  & 0.175  & 3    & 10     & 0      & soc-advogato        & 5\,054  & 39\,374     & 0.003  & 9    & 1\,721 & 1\,080 \\
            tortoise   & 496    & 984      & 0.008  & 21   & 232    & 140    & fb-pages-politician & 5\,908 & 41\,706 & 0.002 & 14 & 1\,922 & 644 \\
            econ-beause                    & 507    & 39\,427  & 0.307  & 3    & 490    & 1      & ia-reality          & 6\,809  & 7\,680      & <0.001 & 8    & 6\,391  & 6\,284  \\
            bio-diseasome                  & 516    & 1\,188   & 0.009  & 15   & 379    & 204    & bio-dmela           & 7\,393  & 25\,569     & <0.001 & 11   & 2\,042 & 2\,005 \\
            soc-wiki-Vote                  & 889    & 2\,914   & 0.007  & 13   & 239    & 193    & rt\_onedirection    & 7\,987  & 8\,100      & <0.001 & 11 & 7\,725 & 7\,717 \\
            ca-CSphd                       & 1\,025 & 1\,043   & 0.002  & 28   & 697    & 693    & bio-grid-human      & 9\,186  & 31\,038     & <0.001 & 14   & 3\,425 & 2\,843 \\
            inf-euroroad                   & 1\,039 & 1\,305   & 0.002  & 62   & 132    & 127    & rt\_lolgop          & 9\,765  & 10\,075     & <0.001 & 10   & 9\,282 & 9\,142 \\
            email-univ                     & 1\,133 & 5\,451   & 0.009  & 8    & 232    & 153    & ia-escorts-dynamic  & 10\,106 & 39\,016     & <0.001 & 10 & 1\,857 & 1\,838 \\
            econ-mahindas                  & 1\,258 & 7\,513   & 0.010  & 8    & 4      & 0      & ca-HepPh            & 11\,204 & 117\,619    & 0.002  & 13   & 7\,014 & 1\,904 \\
            bio-yeast                      & 1\,458 & 1\,948   & 0.002  & 19   & 800    & 735    & web-indochina-2004  & 11\,358 & 47\,606     & <0.001 & 27 & 9\,794 & 7\,234 \\
            comsol                         & 1\,500 & 48\,119  & 0.043  & 12   & 1\,485 & 0      & soc-anybeat         & 12\,645 & 49\,132     & <0.001 & 10   & 8\,430 & 6\,320 \\
            medulla\_1  & 1\,770 & 8\,905   & 0.006  & 6    & 836    & 392    & fb-pages-company    & 14\,113 & 52\,126 & <0.001 & 15 & 4\,568 & 2\,511 \\
            rt\_assad                      & 2\,139 & 2\,786   & 0.001  & 14   & 1\,577 & 1\,562 & econ-poli-large     & 15\,575 & 17\,468     & <0.001 & 15 & 12\,970 & 12\,515 \\
            bio-WormNet-v3       & 2\,274 & 78\,328  & 0.030  & 11   & 2\,080 & 66     & ca-AstroPh          & 17\,903 & 196\,972 & 0.001 & 14 & 10\,462 & 1\,701 \\
            heart2                         & 2\,339 & 340\,229 & 0.124  & 6    & 2\,327 & 0      & ca-CondMat          & 21\,363 & 91\,286     & <0.001 & 15   & 14\,282 & 3\,386 \\
            econ-orani678                  & 2\,529 & 86\,768  & 0.027  & 5    & 42     & 0      & ca-cit-HepTh        & 22\,721 & 2\,444\,642 & 0.009 & 9 & 10\,536 & 317 \\
            road-minnesota                 & 2\,640 & 3\,302   & <0.001 & 99   & 97     & 95     & fb-pages-media      & 27\,917 & 205\,964    & <0.001 & 15 & 6\,330 & 2\,364 \\
            inf-openflights                & 2\,905 & 15\,645  & 0.004  & 14   & 1\,783 & 806    & soc-gemsec-RO       & 41\,773 & 125\,826 & <0.001 & 19 & 6\,626 & 5\,490 \\
            web-edu                        & 3\,031 & 6\,474   & 0.001  & 11   & 2\,768 & 2\,715 & soc-gemsec-HU       & 47\,538 & 222\,887    & <0.001 & 14 & 3\,768 & 2\,725 \\
            econ-psmigr3                   & 3\,140 & 410\,781 & 0.083  & 3    & 227    & 0      & fb-pages-artist     & 50\,515 & 819\,090    & <0.001 & 11 & 4\,967 & 3\,168 \\
            \hline
        \end{tabular}
    }
    \label{tab:graph-stats-compact}}

    \clearpage
    \begin{table}[ht]
    \centering
    \caption{Per-graph comparison of \textsc{SubModST}~\cite{Woydt24}, \textsc{DVind}~\cite{staus2023exact}, \textsc{ILPind}~\cite{staus2023exact} and \textsc{Grover} (ours).
    For each graph, \#Solved reports the number of $k$ values for which an algorithm solved the instance, and $\sum$Seconds is the cumulative runtime over all successfully solved instances.
    Note that different algorithms may solve different subsets of $k$ values.
    Bold entries indicate the algorithm that solved the most instances and, in case of a tie, achieved the lowest total runtime.
    }
    \resizebox{\linewidth}{!}{%
        \begin{tabular}{lrrrrrrrr}
            \hline
            \multirow{2}{*}{Graph} & \multicolumn{2}{c}{\textsc{SubModST}} & \multicolumn{2}{c}{\textsc{DVind}} & \multicolumn{2}{c}{\textsc{ILPind}} & \multicolumn{2}{c}{\textsc{Grover}} \\
            \cline{2-9}
            & \#Solved & $\sum$Seconds & \#Solved   & $\sum$Seconds    & \#Solved & $\sum$Seconds & \#Solved    & $\sum$Seconds    \\
            \hline
            contiguous-usa                 & 19       & 737.35        & 19         & 19.09            & 19       & 1.45          & \textbf{19} & \textbf{0.18}    \\
            brain\_1 & 19       & 110.89        & 19         & 23.51            & 19       & 1.93          & \textbf{19} & \textbf{0.21}    \\
            arenas-jazz                    & 12       & 261.39        & 14         & 678.52           & 19       & 3.82          & \textbf{19} & \textbf{0.63}    \\
            ca-netscience                  & 12       & 899.63        & 19         & 8.39             & 19       & 4.67          & \textbf{19} & \textbf{1.03}    \\
            robot24c1\_mat5                & 8        & 475.52        & 6          & 163.96           & 19       & 15.60         & \textbf{19} & \textbf{7.20}    \\
            tortoise   & 9        & 325.62        & 18         & 1194.34          & 19       & 30.27         & \textbf{19} & \textbf{9.77}    \\
            econ-beause                    & 8        & 657.02        & 19         & 9.35             & 19       & 9.74          & \textbf{19} & \textbf{1.20}    \\
            bio-diseasome                  & 11       & 808.09        & 19         & 37.80            & 19       & 9.28          & \textbf{19} & \textbf{2.47}    \\
            soc-wiki-Vote                  & 11       & 582.12        & 19         & 897.86           & 19       & 29.54         & \textbf{19} & \textbf{3.45}    \\
            ca-CSphd                       & 6        & 360.66        & 19         & 43.00            & 19       & 64.20         & \textbf{19} & \textbf{6.93}    \\
            inf-euroroad                   & 4        & 137.41        & 13         & 681.26           & 19       & 747.87        & \textbf{19} & \textbf{285.02}  \\
            email-univ                     & 9        & 282.53        & 8          & 600.03           & 19       & 1580.24       & \textbf{19} & \textbf{698.38}  \\
            econ-mahindas                  & 9        & 489.03        & 9          & 1235.96          & 19       & 59.13         & \textbf{19} & \textbf{14.12}   \\
            bio-yeast                      & 9        & 246.89        & 18         & 1414.46          & 19       & 173.75        & \textbf{19} & \textbf{40.10}   \\
            comsol                         & 2        & 45.50         & 13         & 233.12           & 19       & 33.33         & \textbf{19} & \textbf{24.88}   \\
            medulla\_1  & 19       & 41.25         & 19         & 120.37           & 19       & 24.75         & \textbf{19} & \textbf{3.78}    \\
            rt\_assad                      & 11       & 726.40        & 19         & 104.49           & 19       & 49.99         & \textbf{19} & \textbf{3.54}    \\
            bio-WormNet-v3       & 5        & 632.09        & 11         & 401.14           & 19       & 48.85         & \textbf{19} & \textbf{37.33}   \\
            heart2                         & 2        & 252.99        & 10         & 212.30           & 19       & 33.35         & \textbf{19} & \textbf{16.66}   \\
            econ-orani678                  & 12       & 1494.00       & 10         & 652.86           & 19       & 68.09         & \textbf{19} & \textbf{16.49}   \\
            road-minnesota                 & 2        & 274.75        & \textbf{8} & \textbf{2606.73} & -        & -             & 2           & 681.46           \\
            inf-openflights                & 10       & 544.23        & 9          & 1070.32          & 19       & 333.10        & \textbf{19} & \textbf{53.83}   \\
            web-edu                        & 19       & 208.18        & 19         & 46.54            & 19       & 33.03         & \textbf{19} & \textbf{4.70}    \\
            econ-psmigr3                   & 8        & 292.08        & 6          & 489.65           & 19       & 1215.21       & \textbf{19} & \textbf{779.61}  \\
            fb-pages-tvshow                & 7        & 351.56        & 12         & 1750.34          & 19       & 1068.56       & \textbf{19} & \textbf{246.12}  \\
            ca-GrQc                        & 6        & 357.37        & 8          & 756.79           & 19       & 4482.46       & \textbf{19} & \textbf{1409.26} \\
            inf-power                      & 2        & 91.74         & 12         & 5160.91          & 16       & 3048.41       & \textbf{18} & \textbf{1490.94} \\
            ca-Erdos992                    & 6        & 422.38        & 10         & 606.08           & 19       & 865.37        & \textbf{19} & \textbf{265.08}  \\
            soc-advogato                   & 10       & 1267.51       & 12         & 1882.45          & 19       & 489.18        & \textbf{19} & \textbf{94.97}   \\
            fb-pages-politician            & 6        & 676.17        & 9          & 2611.87          & 19       & 1703.48       & \textbf{19} & \textbf{358.77}  \\
            ia-reality                     & 8        & 1172.90       & 19         & 161.17           & 19       & 48.34         & \textbf{19} & \textbf{4.52}    \\
            bio-dmela                      & 7        & 1532.54       & 4          & 1670.15          & 2        & 1094.27       & \textbf{12} & \textbf{2582.48} \\
            rt\_onedirection               & 5        & 880.90        & 19         & 132.52           & 19       & 28.15         & \textbf{19} & \textbf{2.61}    \\
            bio-grid-human                 & 6        & 1770.32       & -          & -                & 19       & 6525.33       & \textbf{19} & \textbf{700.97}  \\
            rt\_lolgop                     & 19       & 6029.30       & 19         & 192.43           & 19       & 29.90         & \textbf{19} & \textbf{4.73}    \\
            ia-escorts-dynamic             & 6        & 2483.43       & -          & -                & 9        & 4267.95       & \textbf{19} & \textbf{1915.29} \\
            ca-HepPh                       & 4        & 1944.25       & 3          & 1693.44          & -        & -             & \textbf{5}  & \textbf{1627.93} \\
            web-indochina-2004             & 1        & 498.81        & 5          & 1098.12          & 19       & 389.42        & \textbf{19} & \textbf{177.32}  \\
            soc-anybeat                    & -        & -             & 19         & 7603.85          & 19       & 659.06        & \textbf{19} & \textbf{125.06}  \\
            fb-pages-company               & -        & -             & -          & -                & -        & -             & \textbf{15} & \textbf{3474.49} \\
            econ-poli-large                & -        & -             & 16         & 4605.45          & 19       & 416.15        & \textbf{19} & \textbf{72.00}   \\
            ca-AstroPh                     & -        & -             & -          & -                & -        & -             & -           & -                \\
            ca-CondMat                     & -        & -             & -          & -                & -        & -             & \textbf{4}  & \textbf{1727.81} \\
            ca-cit-HepTh                   & -        & -             & -          & -                & -        & -             & \textbf{17} & \textbf{4769.13} \\
            fb-pages-media                 & -        & -             & -          & -                & -        & -             & \textbf{2}  & \textbf{1088.01} \\
            soc-gemsec-RO                  & -        & -             & -          & -                & -        & -             & -           & -                \\
            soc-gemsec-HU                  & -        & -             & -          & -                & -        & -             & -           & -                \\
            fb-pages-artist                & -        & -             & -          & -                & -        & -             & -           & -                \\
            \hline
        \end{tabular}
    }
    \label{tab:sota-comparison}
\end{table}

    \clearpage
    \begin{table}
    \centering
    \caption{
        Per-graph comparison of \textsc{Grover} and its different configurations.
        Bold entries indicate the configuration that solved the most instances and in case of a tie, achieved the lowest total runtime.
        \textsc{opt. Grover} is not included in this comparison.
    }
    \resizebox{\linewidth}{!}{%
        \begin{tabular}{lrrrrrrrr|rr}
            \hline
            \multirow{2}{*}{Graph} & \multicolumn{2}{c}{\textsc{Base}} & \multicolumn{2}{c}{\textsc{Base + AV}} & \multicolumn{2}{c}{\textsc{Base + $\widetilde{d}(v)$}} & \multicolumn{2}{c}{\textsc{Grover}} & \multicolumn{2}{c}{\textsc{opt. Grover}} \\
            \cline{2-11}
            & \#Solved & $\sum$Seconds & \#Solved & $\sum$Seconds & \#Solved & $\sum$Seconds & \#Solved & $\sum$Seconds & \#Solved & $\sum$Seconds \\
            \hline
            contiguous-usa & 19 & 0.18 & 19 & 0.19 & 19 & 0.18 & \textbf{19} & \textbf{0.18} & 19 & 0.16 \\
            brain\_1 & 19 & 0.18 & \textbf{19} & \textbf{0.18} & 19 & 0.20 & 19 & 0.21 & 19 & 0.21 \\
            arenas-jazz & 19 & 0.58 & \textbf{19} & \textbf{0.52} & 19 & 0.53 & 19 & 0.63 & 19 & 0.40 \\
            ca-netscience & 19 & 1.67 & 19 & 1.08 & 19 & 1.34 & \textbf{19} & \textbf{1.03} & 19 & 0.79 \\
            robot24c1\_mat5 & \textbf{19} & \textbf{5.55} & 19 & 5.57 & 19 & 7.33 & 19 & 7.20 & 19 & 2.99 \\
            tortoise & 19 & 22.84 & 19 & 13.91 & 19 & 15.55 & \textbf{19} & \textbf{9.77} & 19 & 7.65 \\
            econ-beause & \textbf{19} & \textbf{0.59} & 19 & 0.62 & 19 & 1.15 & 19 & 1.20 & 19 & 0.70 \\
            bio-diseasome & 19 & 5.02 & \textbf{19} & \textbf{2.44} & 19 & 3.89 & 19 & 2.47 & 19 & 1.68 \\
            soc-wiki-Vote & 19 & 17.31 & 19 & 8.22 & 19 & 8.06 & \textbf{19} & \textbf{3.45} & 19 & 2.76 \\
            ca-CSphd & 19 & 43.84 & 19 & 13.42 & 19 & 25.26 & \textbf{19} & \textbf{6.93} & 19 & 6.15 \\
            inf-euroroad & 19 & 633.12 & 19 & 526.58 & 19 & 325.68 & \textbf{19} & \textbf{285.02} & 19 & 292.31 \\
            email-univ & 19 & 1596.33 & 19 & 736.96 & 19 & 1285.73 & \textbf{19} & \textbf{698.38} & 19 & 365.96 \\
            econ-mahindas & 19 & 21.32 & 19 & 21.32 & \textbf{19} & \textbf{13.95} & 19 & 14.12 & 19 & 8.51 \\
            bio-yeast & 19 & 150.92 & 19 & 79.70 & 19 & 70.81 & \textbf{19} & \textbf{40.10} & 19 & 32.03 \\
            comsol & \textbf{19} & \textbf{20.76} & 19 & 20.80 & 19 & 25.50 & 19 & 24.88 & 19 & 10.14 \\
            medulla\_1 & 19 & 9.99 & 19 & 5.27 & 19 & 6.29 & \textbf{19} & \textbf{3.78} & 19 & 2.39 \\
            rt\_assad & 19 & 26.74 & 19 & 7.23 & 19 & 12.33 & \textbf{19} & \textbf{3.54} & 19 & 2.72 \\
            bio-WormNet-v3 & 19 & 21.84 & \textbf{19} & \textbf{16.00} & 19 & 38.27 & 19 & 37.33 & 19 & 6.07 \\
            heart2 & \textbf{19} & \textbf{6.41} & 19 & 6.73 & 19 & 16.47 & 19 & 16.66 & 19 & 6.10 \\
            econ-orani678 & \textbf{19} & \textbf{16.14} & 19 & 16.24 & 19 & 16.44 & 19 & 16.49 & 19 & 10.22 \\
            road-minnesota & - & - & 2 & 1091.03 & 1 & 289.35 & \textbf{2} & \textbf{681.46} & 2 & 630.75 \\
            inf-openflights & 19 & 248.31 & 19 & 165.27 & 19 & 66.68 & \textbf{19} & \textbf{53.83} & 19 & 39.03 \\
            web-edu & 19 & 12.02 & \textbf{19} & \textbf{1.95} & 19 & 8.13 & 19 & 4.70 & 19 & 0.81 \\
            econ-psmigr3 & 19 & 942.70 & 19 & 941.32 & 19 & 792.62 & \textbf{19} & \textbf{779.61} & 19 & 516.21 \\
            fb-pages-tvshow & 19 & 831.11 & 19 & 519.97 & 19 & 350.16 & \textbf{19} & \textbf{246.12} & 19 & 202.62 \\
            ca-GrQc & 19 & 4205.63 & 19 & 2528.51 & 19 & 1903.46 & \textbf{19} & \textbf{1409.26} & 19 & 1002.81 \\
            inf-power & 16 & 2614.84 & 17 & 2195.14 & 17 & 1819.69 & \textbf{18} & \textbf{1490.94} & 18 & 1367.21 \\
            ca-Erdos992 & 19 & 956.03 & 19 & 419.74 & 19 & 574.49 & \textbf{19} & \textbf{265.08} & 19 & 224.66 \\
            soc-advogato & 19 & 280.71 & 19 & 165.65 & 19 & 114.31 & \textbf{19} & \textbf{94.97} & 19 & 86.78 \\
            fb-pages-politician & 19 & 1103.28 & 19 & 809.97 & 19 & 391.78 & \textbf{19} & \textbf{358.77} & 19 & 319.89 \\
            ia-reality & 19 & 24.71 & \textbf{19} & \textbf{3.60} & 19 & 20.28 & 19 & 4.52 & 19 & 2.22 \\
            bio-dmela & 2 & 1058.13 & 8 & 3491.01 & 11 & 2891.56 & \textbf{12} & \textbf{2582.48} & 13 & 2289.61 \\
            rt\_onedirection & 19 & 12.14 & \textbf{19} & \textbf{1.23} & 19 & 10.10 & 19 & 2.61 & 19 & 0.86 \\
            bio-grid-human & 19 & 4382.76 & 19 & 3414.61 & 19 & 1413.84 & \textbf{19} & \textbf{700.97} & 19 & 568.92 \\
            rt\_lolgop & 19 & 15.81 & \textbf{19} & \textbf{2.11} & 19 & 16.09 & 19 & 4.73 & 19 & 1.33 \\
            ia-escorts-dynamic & 18 & 7058.83 & 19 & 4815.43 & 19 & 2035.28 & \textbf{19} & \textbf{1915.29} & 19 & 1841.93 \\
            ca-HepPh & - & - & - & - & 3 & 828.59 & \textbf{5} & \textbf{1627.93} & 5 & 1362.08 \\
            web-indochina-2004 & 19 & 346.68 & \textbf{19} & \textbf{143.52} & 19 & 247.44 & 19 & 177.32 & 19 & 52.91 \\
            soc-anybeat & 19 & 496.25 & 19 & 286.42 & 19 & 161.94 & \textbf{19} & \textbf{125.06} & 19 & 103.91 \\
            fb-pages-company & - & - & 2 & 996.48 & 15 & 4061.42 & \textbf{15} & \textbf{3474.49} & 16 & 3403.25 \\
            econ-poli-large & 19 & 340.90 & 19 & 89.17 & 19 & 226.82 & \textbf{19} & \textbf{72.00} & 19 & 41.11 \\
            ca-AstroPh & - & - & - & - & - & - & - & - & - & - \\
            ca-CondMat & - & - & - & - & 2 & 1084.54 & \textbf{4} & \textbf{1727.81} & 4 & 1688.73 \\
            ca-cit-HepTh & - & - & - & - & 17 & 5801.30 & \textbf{17} & \textbf{4769.13} & 18 & 4096.97 \\
            fb-pages-media & - & - & - & - & - & - & \textbf{2} & \textbf{1088.01} & 2 & 1034.82 \\
            soc-gemsec-RO & - & - & - & - & - & - & - & - & - & - \\
            soc-gemsec-HU & - & - & - & - & - & - & - & - & - & - \\
            fb-pages-artist & - & - & - & - & - & - & - & - & - & - \\
            \hline
        \end{tabular}
    }
    \label{tab:config-comparison}
\end{table}

    \clearpage

\end{document}